\documentclass[10pt]{article}
\usepackage[english]{babel}
\usepackage{amsmath,amsfonts,amssymb,amstext,amsmath,amsthm,amscd}

\usepackage[a4paper,top=2.5cm,bottom=3cm,left=3cm,right=4cm]{geometry}

\usepackage{booktabs}
\usepackage{xcolor}
\usepackage{mathtools}
\usepackage{mathrsfs}
\usepackage{dsfont}

\usepackage{graphicx}
\usepackage{subfigure}
\usepackage{wrapfig}
\usepackage{indentfirst}
\usepackage{latexsym}
\usepackage{sidecap}
\usepackage{booktabs}

\usepackage{setspace}
\usepackage{calrsfs}
\usepackage{hyperref}
\hypersetup{
    colorlinks=true,
    linkcolor=red,
    filecolor=magenta,      
    urlcolor=cyan,
}
\usepackage{mathrsfs}
\usepackage{textcomp}
\usepackage{braket}
\usepackage{colortbl}
\usepackage{bbm}
\usepackage{comment}
\usepackage[colorinlistoftodos]{todonotes}

\theoremstyle{plain}
\newtheorem{Theorem}{Theorem}[section]
\theoremstyle{definition}
\newtheorem{Definition}{Definition}[Theorem]
\theoremstyle{remark}
\newtheorem{Remark}[Theorem]{Remark}
\theoremstyle{remark}

\theoremstyle{plain}
\newtheorem{Lemma}[Theorem]{Lemma}
\theoremstyle{plain}
\newtheorem{Corollary}[Theorem]{Corollary}
\theoremstyle{plain}
\newtheorem{Proposition}[Theorem]{Proposition}
\theoremstyle{remark}

\theoremstyle{remark}

\theoremstyle{remark}
\newtheorem{Hypothesis}[Theorem]{Hypothesis}
\theoremstyle{remark}

\newcommand\A{\mathcal{A}}

\newcommand\C{\mathcal{C}}

\newcommand\RR{\mathbb{R}}

\newcommand\ind{\mathbbm{1}}

\DeclareMathAlphabet{\pazocal}{OMS}{zplm}{m}{n}

\newcommand{\ve}{\varepsilon}

\usepackage{tikz}
\usetikzlibrary{arrows}

\newcommand{\R}{\mathbb{R}}

\newcommand{\Ind}[1]{\mathbbm{1}_{\left \{#1\right \}}}

\newcommand{\ds}{\displaystyle}

\newcommand\I{\mathcal{I}}

\usepackage{tikz}
\usetikzlibrary{arrows}

\makeatletter
\pgfdeclareshape{datastore}{
  \inheritsavedanchors[from=rectangle]
  \inheritanchorborder[from=rectangle]
  \inheritanchor[from=rectangle]{center}
  \inheritanchor[from=rectangle]{base}
  \inheritanchor[from=rectangle]{north}
  \inheritanchor[from=rectangle]{north east}
  \inheritanchor[from=rectangle]{east}
  \inheritanchor[from=rectangle]{south east}
  \inheritanchor[from=rectangle]{south}
  \inheritanchor[from=rectangle]{south west}
  \inheritanchor[from=rectangle]{west}
  \inheritanchor[from=rectangle]{north west}
  \backgroundpath{
    \southwest \pgf@xa=\pgf@x \pgf@ya=\pgf@y
    \northeast \pgf@xb=\pgf@x \pgf@yb=\pgf@y
    \pgfpathmoveto{\pgfpoint{\pgf@xa}{\pgf@ya}}
    \pgfpathlineto{\pgfpoint{\pgf@xb}{\pgf@ya}}
    \pgfpathmoveto{\pgfpoint{\pgf@xa}{\pgf@yb}}
    \pgfpathlineto{\pgfpoint{\pgf@xb}{\pgf@yb}}
 }
}
\makeatother

\usepackage{fancyhdr}
\usepackage{calc} 
\pagestyle{fancy}

\fancyhf{} 
\fancyhead[LE,RO]{\bfseries\thepage} 
\fancyhead[LO]{\nouppercase{\rightmark}}
\fancyhead[RE]{\nouppercase{\leftmark}} 



\begin{document}

\title{A bank salvage model by impulse stochastic controls}
\author{Francesco Cordoni$^{a}$ \and Luca Di Persio$^{a}$ \and Yilun Jiang$^{b}$}
\date{}
\maketitle

\renewcommand{\thefootnote}{\fnsymbol{footnote}}
\footnotetext{{\scriptsize $^{a}$ Department of Computer Science, University of Verona, Strada le Grazie, 15, Verona, 37134, Italy}}
\footnotetext{{\scriptsize E-mail addresses: francescogiuseppe.cordoni@univr.it
(Francesco Cordoni), luca.dipersio@univr.it (Luca Di Persio)}}

\footnotetext{{\scriptsize $^{b}$ Department of Mathematics, Penn State University, University Park, PA 16802, USA}}
\footnotetext{{\scriptsize E-mail addresses: sjtujyl@gmail.com (Yilun Jiang)}}

\begin{abstract}
The present paper is devoted to the study of a {\it  bank salvage model} with finite time horizon and subjected to stochastic impulse controls. In our model, the bank's default time is a completely inaccessible random quantity generating its own filtration, then reflecting  the unpredictability of the event itself. In this framework the main goal is to minimize the total cost of the central controller who can inject capitals to save the bank from default. We address the latter task showing that the corresponding \textit{quasi--variational inequality} (QVI) admits a unique viscosity solution, Lipschitz continuous in space and H\"older continuous in time. Furthermore, under mild assumptions on the dynamics the smooth-fit $W^{(1,2),p}_{loc}$ property is achieved for any $1<p<+\infty$.
\end{abstract}

\textbf{AMS Classification subjects:} 49N25, 49N60, 93E20, 91G80 \medskip

\textbf{Keywords or phrases: } Bank salvage model, stochastic impulse control, viscosity solution, inaccessible bankruptcy time, smooth-fit property.

\maketitle

\section{Introduction} 

Mainly motivated by the recent financial credit crisis, starting from 2008-2009 credit crunch, the financial and mathematical community started investigating and generalize existing models, since previous events have shown that financial models used prior to the crisis where inadequate to describe and capture main features of financial markets. Therefore the mathematical and financial communities have focus on developing general and robust models that are able to properly describe financial markets and their main peculiarities. 

From a purely mathematical perspective the above mentioned attention led, among many other research topics, for instance into the study of general stochastic optimal control problems, where instead of classical type of controls, some more realistic controls have been considered. Among the most studied type, impulse type controls have to be mentioned, and regained attention in last decades also due to the many application in finance and economics. In this setting, the controller can intervene on the system at some random time with a discrete type control, where in this case the control solution is represented by the couple $u = (\tau_n,K_n)_n$, where $\tau_n$ represents the decision time at which the ocntroller intervene and $K_n$ instead denotes the action taken by the controller. Above type of control implies that at the intervention time $\tau_n$ the system jumps from the state $X(\tau_n^-)$ to the new state $X(\tau_n) = \Gamma(X(\tau_n),K_n)$, for a suitable function $\Gamma$. Therefore, as standard in optimal control theory, using the dynamic programming principle, it can be shown that \textit{stochastic impulse control} problems can be associated to a quasi--variational Hamilton--Jacobi--Bellman equation (HJB) of the form
\begin{equation}\label{EQN:HJB}
\min \left [- \frac{\partial}{\partial t} V - \mathcal{L}V - f, V- \mathcal{H}V\right ]\,,
\end{equation}
where above $f$ is the running cost, $\mathcal{L}$ is the infinitesimal generator for the process $X$ and $V$ is the value function solution to the above HJB equation. Further, $\mathcal{H}$ is the \textit{nonlocal impulse operator} that characterize HJB equation for impulse type of control. The particular form for the HJB implies that two regions can be retrieved, the \textit{continuation region} where $V > \mathcal{H}V$ and therefore no impulse control is used, and the \textit{impulse region} where on the contrary $V = \mathcal{H}V$ and the controller intervenes. Solution to equation \eqref{EQN:HJB} can be formally defined so that the value function is in fact a viscosity solution, in a sense to be properly defined later on, to equation \eqref{EQN:HJB}. It is clear that, following the above characterization of the domains for the HJB equation, particular attention must be given to the \textit{intervention boundary}. In fact, particular attention is usually given in this field to proving that the boundary is regular enough; this regularity is referred to in literature as \textit{smooth--fit principle}. Several results exist in the smooth--fit principle where the terminal horizon for the control problem, whereas instead finite horizon problem, and in particular the terminal condition of the problem, makes less straightforward the derivation of the smooth--fit principle.

At last, we stress that impulse type stochastic control is strictly connected to optimal stopping problems and optimal switching. The literature on the topic is wide, we refer the interested reader to \cite{GuoChen,Tang}, or also to \cite{Bay,BCS,Che,Che2,Ega,GuoWu,Oks,Pha3,Vat} for other related results.

A second crucial financial aspect that emerged to be fundamental in a general financial formulation after recent crisis there is possible failures of financial entities. In fact, one of the major lack of classical financial models is that no risk of failure is considered into the general setting. Recent financial event has shown that no financial operator can be considered immune from bankruptcy. Therefore it has emerged in last decade an extensive literature that focus on credit risk modeling, assessing as main object the risk that financial entities has to face borrowing or lending money to other players that might fail, see, e.g. \cite{BP,Cre1,Cre2}.

Along aforementioned lines, two main approaches have been developed in literature: {\it structural approach} and {\it intensity--based approach}, see, e.g. \cite{Bie}. Mathematically speaking, the first scenario consits in considering some default event that can be triggered by the underlyng process. Typical example are default triggered by some stopping time defined as a hitting time. Such an approach has been for instance considered in \cite{CDP,CDPP,Lip}. The latter instead considers a default event which is completely inaccessible for the probabilistic reference filtration, so that in order to solve the problem the typical approach is to rely on {\it filtration enlargment} techniques, see, e.g., \cite{BR,Pha}.

The present paper is devoted to study a stochastic optimal control problem of impulse type, where a financial supervisor controls a system, such as financial operators or also some banks. The final goal of the controller is to prevent failures, injecting capital intro the system according to a given criterion to be maximized. The controller has no perfect information regarding the failure of the bank, so that mathematically speaking the failure cannot be foreseen by the controller. The supervisor, which can be though for instance as a central bank, can intervene with some impulse type controls over a finite horizon, so that the optimal solution is represented by both am intervention time and the quantity to inject into the system.

Our approach will be based on a \textit{intensity--based approach}, so that we will assume the default event to be totally inaccessible from the reference filtration, assuming only a typical density assumption. This assumptions will allows us to rewrite the system as deterministic finite horizon impulse problem, using the density distribution of the default event, via \textit{enlargement of filtrations} techniques. We stress that, due to the terminal condition to be imposed, typically a finite horizon stochastic impulse control problem is more difficult to solve than infinite horizon impulse control problems. In fact, it exists an exhaustive literature on stochastic impulse control on infinite time horizon, see, e.g. \cite{Bay,BCS,Che,Ega,GuoWu,Oks,Pha3,Vat}, whereas very few results exist for the finite dimensional case, see, e.g. \cite{Che2,GuoChen,Tang}.

A more financially oriented motivation of the control problem considered in the present work, has often arise in the last decade, mostly  as a consequence of the 2007--2008 credit crunch. This has been for instance the case of {\it Lehman Brothers} failure, which has shown the cascade effect triggered by the default of a sufficiently large and interconnected financial institution, see, e.g., \cite{IvashinaScharfstein2010, KahleStulz2013} and references therein.
We stress that, particular attention has to be given not only towards the magnitude of the stressed bank's financial assets,  but also to its interconnection grade. Indeed, while the exposure with few financial institutions, provided its magnitude is reasonable, can be managed by {\it ad hoc} politics established on a {\it one to one} relationship basis, the situation could be simply ungovernable in case of a high number of connections, hidden links and over-structured contracts.

Since above mentioned financial crisis, it has became typical, within the financial oriented stochastic optimal control  theory, to model a given problem up to a random terminal time instead of considering a fixed, even infinite, horizon. 
From a modelling point of view, the aforementioned scenario has lead to consider the stochastic optimal control approach  to model such situations by considering random terminal times, instead of considering a fixed, or infinite, horizon. 
Analogously, data analysts as well as mathematicians, have started to consider problems of bank bailouts, where bank's default and the consequent contagion spreading inside the network, may induce serious consequences for decades, see, e.g., \cite{EMNS2012}.

From a government perspective, such type of likely high financial fall out, have pushed several central banks to establish specific economic actions to help those sectors of the banking sector of (at least) national interest, under concrete  failure risks. As an example, the latter has been the case of the pro bail-in procedures followed in agreement with the  Directive 2014/59/UE (approved last $1^{st}$ of January, 2016 by the European Union Parliament), and then applied, e.g., in Italy, Ukraine, etc., see, e.g. \cite{KAO2019,SakuramotoUrbani2018}. 
It is relevant to underline that such actions rely also on the following grades of freedom: the possibility, as an alternative to internal rescue, to relocate goods as well as  legal links to a third party, often called bridge-bank, or  to a bad bank which will  collect only a part of assets aiming at maximizing its long-term value; the hierarchical order of those who are called to bear the bail-in, which means that the government can decide to put small creditors on the safe side; and the principle that no shareholder, or creditor, has to bear greater losses than would be expected if there was an administrative liquidation, namely the no worse off creditor idea.

Similar situations have been recently taken into consideration by a series the Central European Bank procedures, with particular reference to the well known \textit{quantitative easing}, as well as in agreement to the creation of \textit{injected currency}, see, e.g., \cite{ECB3,ECB1,ECB4,ECB2}. We would like to underline that quantitative easing type procedures have been experienced also outside the European Union, as in the case of  the actions undertaken by the Japanese Central Bank, whose  intervention has lasted over years, see, e.g. \cite{Jap1,Jap2,Jap3}, or how has been done by the US Federal Reserve not only starting from  2008, but also during the Great Depression of the 1930s, see, e.g., \cite{USA1,USA2,USA3,USA4,USA5}.

The main contribution of the present paper is to develop a concrete financial setting that models the evolution of a financial entity, controlled by an external supervisor who is willing to lend money in order to maximize a given utility function; see also \cite{Cap,CDPP,Eis,Lip,Rog} for setting in which a financial supervisor aims at controlling a system of banks of general financial entities. In compete generality, we will assume that the financial entity may fail at some random time that is inaccessible to the reference filtration, which represents the controller knowledge. Also, we consider a controller that can act on a system with an impulse--type control, so that the optimal solution consists in both a random time at which injecting money into the system and the precise amount of money to inject. We characterize the value function of the above problem, showing that it must solve in a given viscosity sense a certain quasi--variational inequality (QVI). At last we will prove that above QVI admits a unique solution in a viscosity sense and also we provide a regularity results for the intervention boundary, known in literature as \textit{smooth fit principle}.

The paper is so organized, Section \ref{SEC:GS} introduces the general financial and mathematical setting; then Section \ref{SEC:Reg} prove some regularity results for the value function and Section \ref{SEC:Vis} address the problem of existence and uniqueness of a solution. At last Section \ref{SEC:SF} is devoted to the smooth fit principle.

\section{The general setting}\label{SEC:GS}

We will in what follows consider a complete filtered probability space $\left (\Omega,\mathcal{F},\left (\mathcal{F}_t\right )_{t \in [0,T]},\mathbb{P}\right )$, $\left (\mathcal{F}_t\right )_{t \in [0,T]}$ being a filtration satisfying the usual assumptions, namely right--continuity and saturation by $\mathbb{P}$--null sets. Let $T<\infty$ be a fixed terminal time, and let $x$, resp. $y$, denotes the total value of the investments of a given bank, resp.  the total amount of deposit of the same bank. We assume that $x$ and $y$  evolve according to the following system of SDEs
\begin{equation}\label{EQN:XY}
\begin{cases}
dx(t) = c_1 \dot{y}(t)dt +\tilde{\mu}(t)x(t)dt+\tilde{\sigma}(t)x(t) dW(t)\\
\dot{y}(t) = \lambda\left (\frac{x(t)}{y(t)}\right )y(t)
\end{cases}
\;,
\end{equation}
where $W(t)$ is assumed to be a standard Brownian motion adapted to the aforementioned filtration. In particular, the first term in equation \eqref{EQN:XY} accounts for the increase in $X$ due to the fact that new deposits are made, where $c_1 \in [0,1]$ denotes the fractions of deposits which are actually invested in more or less risky financial operations. We stress that by a rescaling argument, with no less of generality $c_1=1$ it can be assumed. Moreover we define the {\it value over liability ratio} $X(t):= \frac{x(t)}{y(t)}$. Then, according to eq. \eqref{EQN:XY} and exploiting the It\^{o}-D\"{o}blin formula, we have
\begin{equation}\label{EQN:EvEq}
\begin{cases}
d X(t) = \left ((c_1 -X(t))\lambda(X(t)) + \tilde{\mu}(t)X(t)\right )dt + \tilde{\sigma}(t) X(t)dW(t) \\
x(0)= x_0
\end{cases}\; .
\end{equation}

We assume the process $X$ to be stopped at {\it completely inaccessible random time} $\tau$, not adapted to the reference filtration $\left (\mathcal{F}_t\right )_{t \in [0,T]}$. From a financial point of view, assuming that $X$ represents the financial value of an agent, above assumption reflects the fact that a bank's failure cannot be predicted. In particular, let us introduce the filtration $\left (\mathcal{H}_t\right )_{t \in [0,T]}$ generated by the stopping time $\tau$, namely $\mathcal{H}_t := \ind_{\{\tau \leq t\}}$. Then we define the augmented filtration $\left (\mathcal{G}_t\right )_{t \in [0,T]}$, where $\mathcal{G}_t := \mathcal{F}_t \vee \mathcal{H}_t$. 

Within this  setting it is interesting to consider an {\it external controller}, e.g., a {\it central bank}, or an equivalent  financial agent acting as a governance institution,  with suitable surveillance rights. Such controller can inject capital in the bank, at random times $\tau_n$. Then, at that time $\tau_n$, the state process $X(t)$ {\it jumps}, in particular we have 
\[
X(\tau_n^-) \not= X(\tau_n) =X(\tau_n^-)+K_n\; ,
\]
therefore $X(t)$ evolves according to
\begin{equation}\label{EQN:EvEqT}
\begin{cases}
d X(t) = \left ((c_1 -X(t))\lambda(X(t)) + \tilde{\mu}(t)X(t)\right )dt + \tilde{\sigma}(t) X(t)dW(t) + \sum_{n:\tau_n \leq t} K_n \\
X(0)= x_0
\end{cases}
\;.
\end{equation}

The solution to the aforementioned system is represented by a couple $u = \left (\tau_n,K_n\right )_{n \geq 1}$, where $(\tau_n)_{n \geq 1}$ is a non--decreasing sequence of stopping times representing  the {\it intervention times}, while  $(K_n)_{n\geq 0}$ is a sequence of $\left (\mathcal{G}_t\right )$--adapted random variables taking values in $\mathcal{A}\subset [0,\infty)$. In particular the sequence $(K_n)_{n\geq 0}$ indicates the  {\it financial actions} taken at time $\tau_n$. The following is the definition of \textit{admissible impulse strategy} $u$.

\begin{Definition}[Admissible impulse strategy]
The admissible control set $\mathcal{U}$ consists of all the impulse controls $u = \left (\tau_n,K_n\right )_{n \geq 0}$ such that
\begin{equation}\label{ADM}
	\begin{array}{l}
\{\tau_i\}_{i\geq 1} ~\hbox{are}~\mathcal{G}_t~\hbox{adapted stopping times and increasing, i.e }	\tau_1<\tau_2<\cdots<\tau_i<\cdots,\\[3mm]
	 K_i\in {\cal A}~\hbox{and}~K_i\in\mathcal{G}_{\tau_i},~\forall~i\geq 1.\\[3mm]
\end{array}
\end{equation}
\end{Definition}

\begin{Remark}
Equivalently, we will use a different notation, $\xi_t(\cdot)$ to express the same space, i.e. for all $0\leq t\leq s\leq T$,
$$\xi_t(s) = \sum_{t\leq\tau_i<s}K_i$$ 
where $\tau_i,K_i$ satisfies \eqref{ADM} and the corresponding admissible control set $\mathcal{U}[t,T]$ consists of all $\xi_t(\cdot)$.
\end{Remark}

In what follows we will denote for short
\[
\begin{array}{l}
\mu(t,X_t) :=  \left ((c_1 -X(t))\lambda(X(t)) + \tilde{\mu}(t)X(t)\right )\,,\\\\[3mm]
\sigma(t,X_t) :=  \tilde{\sigma}(t) X(t)\,,
\end{array}
\]

so that, for any admissible control $u\in\mathcal{U}[t,T]$, define $X^u_{t,x}(s)$ as
$$X^u_{t,x}(s) = x+\int_t^s\mu(r,X^u_{t,x}(r))\,dr+\int_t^s\sigma(r,X^u_{t,x}(r))d W(r)+\xi_t(s)\quad s\geq t,$$
which is the unique strong solution of dynamics \eqref{EQN:EvEqT} with initial condition $X^u_{t,x}(t) = x$. We aim at solving the following stochastic control problem whose value function is defined as
\begin{equation}
V(t,x) \doteq \sup_{u\in\mathcal{U}[t,T]}J^u(t,x)\;.
	\label{EQN:val}
\end{equation}
where $J^u(t,x)$ is the expected cost of the form

\begin{align}
J^u(t,x) \doteq \mathbb E  \bigg [\int_t^{\tau \wedge T} & f(X^u_{t,x}(s)) ds + g_1(X^u_{t,x}(T))\Ind{\tau \geq T}+\label{EQN:GenCP}\\
& -g_2(X^u_{t,x}(\tau))\Ind{\tau < T} - \sum_{t \leq \tau_n \leq \tau \wedge T}\left ( K_n+\kappa\right ) \bigg] \; ,\nonumber\\
\end{align}

where $f$, resp. $g$, represents the \textit{running cost}, resp. the \textit{terminal cost}, while  $K+\kappa$, $\kappa >0$, is a suitable constant defining the cost required by the capital injection. Above, we have denoted by $\tau$ the bank {\it default time}, with respect to the process $X(t)$. We assume, as specified above, that $\tau$ is a {\it completely inaccessible random time}, and it is not adapted to the reference filtration $\left (\mathcal{F}_t\right )_{t \in [0,T]}$. Also, recall that $\left (\mathcal{H}_t\right )_{t \in [0,T]}$ is the filtration generated by the stopping time $\tau$, namely $\mathcal{H}_t := \ind_{\{\tau \leq t\}}$, whilst $\left (\mathcal{G}_t\right )_{t \in [0,T]}$ is the filtration, namely $\mathcal{G}_t := \mathcal{F}_t \vee \mathcal{H}_t$. 

Following the standard literature, see, e.g., \cite{Karatzas} both the running and terminal costs are usually given in terms of  suitable utility functions representing the {\it utility gains} from the bank's value. A  typical example is $f(x) = \frac{x^p}{p}$, $p\in (0,1)$. 
As regards the cost $K+\kappa$, it reflects the fact that injecting an amount  $K$ of capital to increase the bank's liquidity level, implies a non negligible cost,  otherwise such a financial help would be always profitable.

Throughout the work we will make the following assumptions:
\begin{Hypothesis}\label{ASS:Lpz}
\begin{description}
\item[(i)] the function $\lambda:\RR \to \RR$ is Lipschitz continuous, namely there exists a constant $L_\lambda >0$ such that
\[
|\lambda(x)-\lambda(y)| \leq L_\lambda |x-y|\, ,\quad \forall \, x,\,y\in \RR\,;
\]

\item[(ii)] the functions $f$, $g_1$, $g_2$ are Lipschitz continuous, namely there exist constants $L_f$, $l_{g_1}$ and $L_{g_2} >0$ such that

\begin{align}
&|f(x)-f(y)| \leq L_f |x-y|\, ,\quad \forall \, x,\,y\in \RR\,;\nonumber \\
&|g_1(x)-g_1(y)| \leq L_{g_1} |x-y|\, ,\quad \forall \, x,\,y\in \RR\,;\nonumber \\
&|g_2(x)-g_2(y)| \leq L_{g_2} |x-y|\, ,\quad \forall \, x,\,y\in \RR\,;\nonumber \\
\end{align}

We also assume that there exist constants $C_f$, $C_{g_1}$ and $C_{g_2} >0$ such that

\begin{align}
&f(x) < C_f\,,\quad g_1(x) < C_{g_1}\,,\quad g_2(x)< C_{g_2}\,\nonumber;
\end{align}

\item[(iii)] the functions $\mu(t),\sigma(t) \in C([0,T])$. 
\item[(iv)] No terminal impulse, i.e.
$$
g_1(x)\geq \sup_{K>0} g_1(x+K)-K-\kappa.
$$
\end{description}
\end{Hypothesis}


The boundedness properties for the running and terminal cost can be interpreted in the following sense: since we are seeking the optimal capital injection strategy for the government over a finite time horizon, we may think that there is a {\it healthy} level $U>0$ such that when the bank's capital is growing to infinity, then the utility remains flat, so that the government will have no interest in injecting more capital. As to make an example, we can take
\[
U(x) = U-e^{-x}\,.
\]


\begin{Remark}
A further generalization of the above optimal control problem, consists in considering  a controller having  two different ways to influence the evolution of the state process $x$, namely
\begin{itemize}
\item[(1)]an impulse type control $(\tau_n,K_n)_n$, hence as in equation \eqref{EQN:EvEqT} by injecting capital at random times $\tau_n$;
\item[(2)] a continuous type control $\alpha(t)$,  by choosing at any time $t$ the rate at which $x$ is growing. 
\end{itemize}
In particular an action of type $2$ implies that eq. \eqref{EQN:EvEqT} can be reformulated as follows
\[
\begin{cases}
d X(t) = \left ((c_1 -X(t))\lambda(X(t)) + (\mu(t)- \alpha(t)) X(t)\right )dt + \sigma(t) X(t)dW(t) + \sum_{n:\tau_n \leq t} K_n\\
X(0)= x_0
\end{cases}
\; ,
\]
where $\alpha$ represents the continuous control variable $\alpha(t) \in [0,\bar{r}]$, for a suitable constant $\bar{r}$, where $\alpha=0$ stands for higher  returns and  $\alpha=\bar{r}$ denotes lower returns. This reflects the financial assumption that the controller, e.g. a {\it central bank}, can change the interest rate according to macroeconomic variables, as the country inflation level, the forecast of supranational interest rates, the  the markets' belief about the health of the financial sector under the central bank control, etc. In fact, choosing $\alpha=0$ the bank value grows  at rate $\mu(t)$, which is  strictly greater than $\mu(t) - \alpha(t)$ for a given control $0< \alpha(t) \leq \bar{r}$. We refer to the above discussion, see also, e.g.,  \cite{ECB3,ECB1,ECB4,Jap1,Jap2,Jap3,ECB2}, for more financially oriented ideas  supporting the latter setting. Accordingly, we can assume that the controller aims at maximizing a functional of the following type
\begin{align}
J^{u,a}(t,x) = \mathbb E_t \bigg [\int_t^{\tau \wedge T} &f(X^{u,a}_{t,x}(s),\alpha(s)) ds + g_1(X^{u,a}_{t,x}(T))\Ind{\tau \geq T}+\nonumber\\
& -g_2(X^{u,a}_{t,x}(\tau))\Ind{\tau < T} - \sum_{t \leq \tau_n \leq \tau \wedge T}\left ( K_n+\kappa\right )\bigg ] \; \nonumber .\\
\end{align}
\end{Remark}

In what follows we assume the following density hypothesis on the random time to hold, hence requiring that the distribution of $\tau$ is absolutely continuous with respect to the Lebesgue measure:

\begin{Hypothesis}\label{ASS:D}
For any $t \in [0,T]$, there exists a process $\left (\rho_t(s)\right )_{t \in [0,T]}$, such that
\begin{equation}\label{EQN:DH}
\mathbb{P}\left (\left . \tau \leq s \right |\mathcal{F}_t\right ) = 1-\rho_t(s) \; ,
\end{equation}
\end{Hypothesis}
%

The main idea of the following procedure is to switch from the reference filtration $\mathcal{F}_t$, to the default free filtration $\mathcal{G}_t$,
by mean of the following lemma, see \cite[Lemma 4.1.1]{Bie}.

\begin{Lemma}\label{LEM:CF}
For any $\mathcal{F}_T$-measurable random variable $X$ it holds
\begin{equation}\label{EQN:Cox}
\mathbb{E}\left [\left . X \ind_{T \leq \tau}\right |\mathcal{G}_t\right ] =\Ind{\tau > t}\frac{\mathbb{E}\left [\left . X \ind_{\tau >T}\right |\mathcal{F}_t\right ]}{\mathbb{E}\left [\left . \ind_{\tau >t}\right |\mathcal{F}_t\right ]} = \Ind{\tau > t}e^{\Gamma_t} \mathbb{E}\left [\left . X e^{-\Gamma_T}\right | \mathcal{F}_t \right ]\, ,
\end{equation}
with
\[
\Gamma_t := -\ln \left (1-\mathbb{P}\left (\left .\tau \leq t\right |\mathcal{F}_t\right )\right )\,.
\]
\end{Lemma}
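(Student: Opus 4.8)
\emph{Proof plan.} This is the classical ``key lemma'' behind the Cox/canonical construction of a totally inaccessible default time, so the plan is to reproduce the standard filtration-enlargement argument while carefully tracking the $\mathbb{P}$-null events allowed by Hypothesis~\ref{ASS:D}. The single structural ingredient is the following elementary observation about the progressive enlargement $\mathcal{G}_t=\mathcal{F}_t\vee\mathcal{H}_t$: the family
\[
\mathcal{D}:=\bigl\{A\subseteq\Omega:\ \exists\,B\in\mathcal{F}_t\ \text{with}\ A\cap\{\tau>t\}=B\cap\{\tau>t\}\bigr\}
\]
is a $\sigma$-algebra containing $\mathcal{F}_t$ and all the generators $\{\tau\le u\}$, $u\le t$, of $\mathcal{H}_t$ (for the latter, $\{\tau\le u\}\cap\{\tau>t\}=\emptyset$), hence $\mathcal{D}\supseteq\mathcal{G}_t$; consequently every $\mathcal{G}_t$-measurable random variable coincides on $\{\tau>t\}$ with an $\mathcal{F}_t$-measurable one. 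I would also record at the outset that Hypothesis~\ref{ASS:D} forces $\mathbb{P}(\tau=T)=0$ (so $\ind_{T\le\tau}=\ind_{\tau>T}$ a.s.) and $\mathbb{P}(\tau>t\mid\mathcal{F}_t)=\rho_t(t)>0$ a.s. (so the ratio in \eqref{EQN:Cox} is well defined), with $e^{-\Gamma_t}=1-\mathbb{P}(\tau\le t\mid\mathcal{F}_t)=\mathbb{P}(\tau>t\mid\mathcal{F}_t)$.

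For the first equality, set $Z:=\ind_{\tau>t}\,\mathbb{E}[X\ind_{\tau>T}\mid\mathcal{F}_t]\big/\mathbb{P}(\tau>t\mid\mathcal{F}_t)$, which is $\mathcal{G}_t$-measurable and integrable; by the monotone class theorem it is enough to verify $\mathbb{E}[X\ind_{T\le\tau}\ind_A]=\mathbb{E}[Z\ind_A]$ for all $A\in\mathcal{G}_t$ and nonnegative (bounded) $X$. Fixing $A\in\mathcal{G}_t$ and a witness $B\in\mathcal{F}_t$ from $\mathcal{D}$, and using that $t<T$ implies $\{\tau\ge T\}\subseteq\{\tau>t\}$, I would compute on one side
\[
\mathbb{E}\bigl[X\ind_{T\le\tau}\ind_A\bigr]=\mathbb{E}\bigl[X\ind_{\tau\ge T}\ind_B\ind_{\tau>t}\bigr]=\mathbb{E}\bigl[X\ind_{\tau\ge T}\ind_B\bigr]=\mathbb{E}\bigl[\ind_B\,\mathbb{E}[X\ind_{\tau\ge T}\mid\mathcal{F}_t]\bigr],
\]
and on the other, since $\ind_B\,\mathbb{E}[X\ind_{\tau>T}\mid\mathcal{F}_t]/\mathbb{P}(\tau>t\mid\mathcal{F}_t)$ is $\mathcal{F}_t$-measurable,
\[
\mathbb{E}[Z\ind_A]=\mathbb{E}[Z\ind_B]=\mathbb{E}\!\left[\ind_B\,\frac{\mathbb{E}[X\ind_{\tau>T}\mid\mathcal{F}_t]}{\mathbb{P}(\tau>t\mid\mathcal{F}_t)}\,\mathbb{P}(\tau>t\mid\mathcal{F}_t)\right]=\mathbb{E}\bigl[\ind_B\,\mathbb{E}[X\ind_{\tau>T}\mid\mathcal{F}_t]\bigr].
\]
Since $\ind_{\tau\ge T}=\ind_{\tau>T}$ a.s., the two expressions agree, which is the first identity in \eqref{EQN:Cox}.

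Finally, for the second equality I would condition on $\mathcal{F}_T$ (which contains $\sigma(X)$): using $\mathbb{E}[\ind_{\tau>T}\mid\mathcal{F}_T]=1-\mathbb{P}(\tau\le T\mid\mathcal{F}_T)=e^{-\Gamma_T}$,
\[
\mathbb{E}[X\ind_{\tau>T}\mid\mathcal{F}_t]=\mathbb{E}\bigl[X\,\mathbb{E}[\ind_{\tau>T}\mid\mathcal{F}_T]\mid\mathcal{F}_t\bigr]=\mathbb{E}[X e^{-\Gamma_T}\mid\mathcal{F}_t],
\]
and dividing by $\mathbb{P}(\tau>t\mid\mathcal{F}_t)=e^{-\Gamma_t}$ yields the stated form. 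The part that demands the most care is not any of these computations but the preliminary measurability reduction $A\cap\{\tau>t\}=B\cap\{\tau>t\}$ and the bookkeeping of the $\mathbb{P}$-null sets $\{\tau=T\}$ and $\{\mathbb{P}(\tau>t\mid\mathcal{F}_t)=0\}$, which is exactly where Hypothesis~\ref{ASS:D} enters; once these are in place the identity follows from conditioning and the monotone class theorem.
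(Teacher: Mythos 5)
Your argument is correct and is precisely the standard ``key lemma'' proof that the paper itself does not reproduce but simply defers to the cited reference (\cite[Lemma~4.1.1]{Bie}): reduce every $A\in\mathcal{G}_t$ to a set $B\in\mathcal{F}_t$ agreeing with it on $\{\tau>t\}$, verify the defining property of the conditional expectation against such sets, and then use the tower property over $\mathcal{F}_T$ for the second equality. Your extra bookkeeping (positivity of $\mathbb{P}(\tau>t\mid\mathcal{F}_t)$, which is automatic under the Cox specification \eqref{EQN:CoxP}, and negligibility of $\{\tau=T\}$) is exactly the right set of caveats, so there is nothing to add.
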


A typical example, which will be used in what follows, consists in considering a Cox process, hence taking  $\rho$ to be an exponential function of the form
\[
\rho_t(s) := e^{- \int_t^{s} \beta(r)dr}\, ,
\]
for a suitable function $\beta$. In this particular case we have that
\[
\Gamma_s := -\ln \left (e^{-\int_t^s \beta(r) dr}\right )= \int_t^s \beta(r)dr\,,
\]
so that the equation \eqref{EQN:Cox} reads
\[
\mathbb{E}\left [\left . X \ind_{T \leq \tau}\right |\mathcal{G}_s\right ]=\Ind{\tau > s}e^{\int_t^s \beta(r)dr} \mathbb{E}\left [\left . X e^{-\int_t^T \beta(r)dr}\right | \mathcal{F}_s \right ]\, .
\]


We can thus prove the following result.

\begin{Hypothesis}\label{ASS:CoxP}
Let us assume that $\tau$ is a Cox process, namely it is of the form
\begin{equation}\label{EQN:CoxP}
\rho_t(s) := e^{- \int_t^{s} \beta(r)dr}\, ,
\end{equation}  
with intensity given by $\beta$.
\end{Hypothesis}

\begin{Remark}
Notice that we could have assumed a more general assumption, often denoted in literature as \textit{density hypothesis}, requiring  that there exists a process $\beta$ such that
\[
\mathbb{P}\left(\tau\in\,ds|\mathcal{F}_t\right) = \beta(s)\, ,
\]
see, e.g. \cite{Bie}.
\end{Remark}

\begin{Theorem}\label{THM:CM}
Let $F$ be a $\mathcal{G}$-adapted process and let us assume $\tau$ to be a Cox process defined as in equation \eqref{EQN:CoxP}, then it holds
\[
\mathbb{E}\left [\left .\int_t^{\tau \wedge T} F_r dr\right |\mathcal{G}_t \right ] = \Ind{\tau > t} \int_t^T \mathbb{E}\left [\left .e^{-\int_t^r \beta(s)ds}  F_r \right | \mathcal{F}_t \right ]dr\, . 
\]
\end{Theorem}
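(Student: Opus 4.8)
The plan is to reduce the identity, by a conditional Fubini argument, to a pointwise-in-time application of the Cox/key Lemma~\ref{LEM:CF}. First, by the density Hypothesis~\ref{ASS:D} we have $\mathbb{P}(\tau=r)=0$ for every fixed $r$, so the random upper limit can be turned into an indicator,
\[
\int_t^{\tau\wedge T}F_r\,dr=\int_t^T F_r\,\Ind{r<\tau}\,dr\qquad\text{a.s.},
\]
where on $\{\tau\le t\}$ the left-hand side is understood to vanish, consistently with the factor $\Ind{\tau>t}$ in the statement. Assuming the integrability $\mathbb{E}\int_t^T|F_r|\,dr<\infty$ — which holds for all the processes $F$ to which the theorem will be applied — the conditional Fubini--Tonelli theorem allows interchanging $\mathbb{E}[\,\cdot\,|\mathcal{G}_t]$ with $\int_t^T\,dr$:
\[
\mathbb{E}\!\left[\left.\int_t^{\tau\wedge T}F_r\,dr\;\right|\mathcal{G}_t\right]=\int_t^T \mathbb{E}\!\left[\,F_r\,\Ind{r<\tau}\;\big|\;\mathcal{G}_t\right]dr .
\]

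\textbf{Pointwise step.} Next I would treat each integrand $\mathbb{E}[F_r\Ind{r<\tau}\mid\mathcal{G}_t]$ separately for fixed $r\in[t,T]$. Since $F$ is $\mathcal{G}$-adapted, $F_r$ is $\mathcal{G}_r=\mathcal{F}_r\vee\mathcal{H}_r$-measurable, and a standard property of the progressively enlarged filtration yields an $\mathcal{F}_r$-measurable random variable $\tilde F_r$ with $F_r\Ind{\tau>r}=\tilde F_r\Ind{\tau>r}$; in the applications below $F_r$ is in fact a function of an $\mathcal{F}_r$-measurable quantity, so that $\tilde F_r=F_r$. Applying Lemma~\ref{LEM:CF} with $T$ replaced by $r$ and $X=\tilde F_r$ (an $\mathcal{F}_T$-measurable random variable), and recalling that $\Ind{r<\tau}=\ind_{r\le\tau}$ a.s. for fixed $r$,
\[
\mathbb{E}\!\left[\,\tilde F_r\,\ind_{r\le\tau}\;\big|\;\mathcal{G}_t\right]=\Ind{\tau>t}\,e^{\Gamma_t}\,\mathbb{E}\!\left[\,\tilde F_r\,e^{-\Gamma_r}\;\big|\;\mathcal{F}_t\right].
\]
Under the Cox Hypothesis~\ref{ASS:CoxP} one has $\Gamma_s=\int_t^s\beta(u)\,du$ (computed from the base point $t$), hence $e^{\Gamma_t}=1$ and $e^{-\Gamma_r}=e^{-\int_t^r\beta(u)\,du}$, which gives
\[
\mathbb{E}\!\left[\,F_r\,\Ind{r<\tau}\;\big|\;\mathcal{G}_t\right]=\Ind{\tau>t}\,\mathbb{E}\!\left[\,e^{-\int_t^r\beta(u)\,du}\,F_r\;\big|\;\mathcal{F}_t\right].
\]
Substituting this back under the integral $\int_t^T\,dr$ yields the claimed identity.

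\textbf{Main obstacle.} The only genuinely delicate point is the reduction of the $\mathcal{G}_r$-measurable random variable $F_r$ to an $\mathcal{F}_r$-measurable one on $\{\tau>r\}$, since Lemma~\ref{LEM:CF} is stated only for $\mathcal{F}_T$-measurable integrands; this, together with ensuring the integrability required by conditional Fubini, is where care is needed, though both facts are standard in the theory of enlargement of filtrations. The remainder is bookkeeping with the explicit form of $\Gamma$ under the Cox assumption.
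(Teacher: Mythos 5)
Your proposal is correct and follows essentially the same route as the paper: rewrite the random upper limit as $\int_t^T F_r\Ind{\tau>r}\,dr$, interchange the conditional expectation with the time integral by conditional Fubini, and then apply the key Lemma~\ref{LEM:CF} (equivalently, the tower property through $\mathcal{F}_r$ together with $\mathbb{E}[\Ind{\tau>r}\mid\mathcal{F}_r]=e^{-\int\beta}$) at each fixed $r$. If anything, your treatment is slightly more careful than the paper's, since you explicitly address the reduction of the $\mathcal{G}_r$-measurable $F_r$ to an $\mathcal{F}_r$-measurable version on $\{\tau>r\}$, a step the paper's proof uses implicitly when it pulls $F_r$ out of the inner conditioning on $\mathcal{F}_r$.
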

\begin{proof}
Exploiting \eqref{LEM:CF} together with \eqref{EQN:CoxP} we have that
\begin{align}
\mathbb{E}\left [\left .\int_t^{\tau \wedge T} F_r dr\right |\mathcal{G}_t \right ] &= \int_t^{T} \mathbb{E}\left [\left .\Ind{\tau > r}\Ind{\tau > t} F_r\right |\mathcal{G}_t \right ]dr=\nonumber\\
&= \Ind{\tau > t} \int_t^T  \mathbb{E}\left [\left .\Ind{\tau>r} F_r e^{\int_0^t \beta(s)ds}\right | \mathcal{F}_t \right ]dr =\nonumber\\
&=\Ind{\tau > t} \int_t^T e^{\int_0^t \beta(s)ds} \mathbb{E}\left [\left . \mathbb{E}\left [\left .\Ind{\tau>r}\right |\mathcal{F}_r\right ] F_r \right | \mathcal{F}_t \right ]dr=\nonumber\\
&=\Ind{\tau > t} \int_t^T e^{\int_0^t \beta(s)ds} \mathbb{E}\left [\left . e^{-\int_0^r \beta(s)ds} F_r \right | \mathcal{F}_t \right ]dr=\nonumber\\
&=\Ind{\tau > t} \int_t^T \mathbb{E}\left [\left .e^{-\int_t^r \beta(s)ds} F_r \right | \mathcal{F}_t \right ]dr\nonumber\, ,
\end{align}
and this completes the proof.
\end{proof}

Let us then denote the impulse control for this system by
\[
u = (\tau_1,\tau_2,...,\tau_j,...;K_1,K_2,...,K_j,...)\in \mathcal{U}\,,
\]
where $0\leq\tau_1\leq\tau_2\leq...$ are $\mathcal{G}_t$ stopping times and $K_j\in \mathcal{A}$ is $\mathcal{G}_{\tau_j}$- measurable for all j, for any $u\in\mathcal{U}$, then, using \eqref{ASS:CoxP} together with \eqref{THM:CM}, the corresponding functional in equation \eqref{EQN:GenCP} can be rewritten as
\begin{align}
J^{u}(t,x) = \mathbb{E}_t \bigg [\int_t^{T} &\rho_t(s) \Big(f(X(s))-\beta(s)g_2(X(s))\Big)ds +\rho_t(T)g_1(X(T))+\label{EQN:OBJ}\\
&- \sum_{t \leq \tau_n \leq T} \rho_t(\tau_n)\left ( K_n+\kappa\right )\bigg ]\nonumber \,,
\end{align}
so that the original stochastic control problem, with random terminal time, turns out to be a stochastic control problem with deterministic terminal time.

\begin{Remark}
A different approach would be to consider $\tau$ to be $\mathcal{F}_t$--adapted, for instance of the form
\[
\tau = \inf \{ t \,:\, x(t) \leq 0\}\,,
\]
which implies that the hypothesis \eqref{EQN:DH} is no longer satisfied and, consequently, the  above mentioned techniques cannot be exploited any longer. Nevertheless, under this setting it is possible to recover a HJB equation endowed with suitable boundary conditions. We refer to \cite{FS,OS}, for a mathematical treatment of this type of stochastic control problems, while in \cite{Lip,Mer} one can find applications to the mathematical finance scenario.
\end{Remark}

\begin{Theorem}(Dynamic programming principle)\label{THM:DPP}
Let $(t,x) \in [0,T]\times \RR$, then it holds
\begin{align}
V(t,x) = \sup_{u\in\mathcal{U}[t,T]} \mathbb{E}\bigg [\int_t^{\theta} &\rho_t(s) \Big(f(X(s))-\beta(s)g_2(X(s))\Big)ds+\nonumber\\
& -\sum_{t \leq \tau_n \leq \theta} \rho_t(\tau_n)\left ( K_n+\kappa\right ) + \rho_t(\theta)V(\theta,X^u_{t,x}(\theta))\bigg ]\,,\nonumber
\end{align}

for any stopping time $\theta$ valued in $[t,T]$.
\end{Theorem}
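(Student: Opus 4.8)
The plan is to reduce the statement, via the filtration-switching formula of Theorem~\ref{THM:CM} and the rewriting \eqref{EQN:OBJ}, to a standard dynamic programming principle for a controlled diffusion with deterministic horizon and a multiplicative discount factor $\rho_t(s)=e^{-\int_t^s\beta(r)dr}$. The key structural observation is the \emph{flow property} of the discount: for $t\le\theta\le s$ one has $\rho_t(s)=\rho_t(\theta)\rho_\theta(s)$, and the controlled state satisfies the Markov/cocycle property $X^u_{t,x}(s)=X^{u'}_{\theta,X^u_{t,x}(\theta)}(s)$ on $\{s\ge\theta\}$, where $u'$ is the restriction of the impulse strategy $u$ to $[\theta,T]$. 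Splitting the integral in \eqref{EQN:OBJ} at $\theta$ and conditioning on $\mathcal{G}_\theta$ should then produce exactly the claimed identity.

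First I would prove the inequality $V(t,x)\le \text{(RHS)}$. Fix $u\in\mathcal{U}[t,T]$ and a stopping time $\theta\in[t,T]$. Using \eqref{EQN:OBJ}, write
\begin{align}
J^u(t,x) &= \mathbb{E}\Big[\int_t^\theta \rho_t(s)\big(f(X(s))-\beta(s)g_2(X(s))\big)ds - \sum_{t\le\tau_n\le\theta}\rho_t(\tau_n)(K_n+\kappa)\Big]\nonumber\\
&\quad + \mathbb{E}\Big[\mathbb{E}_\theta\Big(\int_\theta^T \rho_t(s)\big(f-\beta g_2\big)(X(s))ds + \rho_t(T)g_1(X(T)) - \sum_{\theta<\tau_n\le T}\rho_t(\tau_n)(K_n+\kappa)\Big)\Big].\nonumber
\end{align}
In the inner conditional expectation, factor out $\rho_t(\theta)$ using the flow property $\rho_t(s)=\rho_t(\theta)\rho_\theta(s)$ (note $\rho_t(\theta)$ is $\mathcal{G}_\theta$-measurable), and recognize the remaining expression, by the Markov property of $X$ under the restricted control $u'$, as $J^{u'}(\theta,X^u_{t,x}(\theta))\le V(\theta,X^u_{t,x}(\theta))$. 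Taking the supremum over $u$ gives the bound. For the reverse inequality $V(t,x)\ge\text{(RHS)}$, I would use an $\varepsilon$-optimal control argument: given $\varepsilon>0$, for each state $z$ pick $u^{z,\varepsilon}\in\mathcal{U}[\theta,T]$ with $J^{u^{z,\varepsilon}}(\theta,z)\ge V(\theta,z)-\varepsilon$; then, given an arbitrary control $u^1$ on $[t,\theta]$, concatenate it with (a measurable selection of) $u^{X^{u^1}_{t,x}(\theta),\varepsilon}$ on $[\theta,T]$ to form an admissible control on $[t,T]$, and run the same computation backwards. Letting $\varepsilon\to0$ yields the claim.

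The main obstacle is the measurability/selection issue in the reverse inequality: one must patch together a family of $\varepsilon$-optimal controls indexed by the (random) intermediate state $X^u_{t,x}(\theta)$ into a single admissible $\mathcal{G}_t$-adapted impulse strategy, and verify that the concatenated object still satisfies the admissibility conditions \eqref{ADM} (increasing stopping times across the junction at $\theta$, $\mathcal{A}$-valued and $\mathcal{G}_{\tau_j}$-measurable impulses). This is handled by a standard measurable-selection theorem together with the Lipschitz and boundedness assumptions in Hypothesis~\ref{ASS:Lpz}, which guarantee that $J^u$ and $V$ are well-defined and finite and that the value function is measurable in the state; the boundedness of $f,g_1,g_2$ and the linear growth of $\mu,\sigma$ also justify all the Fubini/interchange-of-integration steps and the uniform integrability needed to split and recombine the expectations. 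I would also remark that the case $\{\tau\le t\}$ is trivial since the indicator $\Ind{\tau>t}$ in Theorem~\ref{THM:CM} makes both sides vanish, so one works throughout on $\{\tau>t\}$.
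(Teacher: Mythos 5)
Your outline is the standard two-inequality proof of the DPP, and it matches what the paper itself does: the paper offers no proof of Theorem~\ref{THM:DPP} at all, deferring to \cite{OS,Pha}, and those references carry out exactly the splitting-plus-conditioning argument you describe (flow property $\rho_t(s)=\rho_t(\theta)\rho_\theta(s)$, cocycle property of $X^u_{t,x}$, easy inequality by restriction of controls, hard inequality by concatenation of $\varepsilon$-optimal controls). Two remarks on the part you correctly identify as the main obstacle. First, rather than invoking an abstract measurable-selection theorem, the cleaner route here is to exploit the regularity the paper establishes independently of the DPP: the estimate \eqref{EQN:CS} shows $x\mapsto J^u(t,x)$ is Lipschitz uniformly in $u$, and Lemma~\ref{l:CSHT} shows $V$ is continuous; one then partitions $\RR$ into countably many small intervals, picks one $\varepsilon$-optimal control per interval (indexed also over a discretization of the values of $\theta$), and pastes them together with an $O(\varepsilon)$ loss controlled by these Lipschitz bounds. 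This avoids selection theorems entirely and is the usual device in \cite{Pha}. Second, your concatenated strategy must be checked against \eqref{ADM}: the impulse times of the piece on $[\theta,T]$ must be genuine $\mathcal{G}$-stopping times when composed with the random junction time $\theta$, and one must decide consistently whether an impulse occurring exactly at $\theta$ is charged to the first or the second piece (your decomposition writes $\sum_{t\le\tau_n\le\theta}$ and $\sum_{\theta<\tau_n\le T}$, which is the right convention to avoid double counting, but it should be stated that $V(\theta,\cdot)$ is evaluated at the post-impulse state if an impulse occurs at $\theta$). With these points made precise the argument is complete and consistent with the references the paper cites.
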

\begin{proof}
See, e.g., \cite{OS,Pha}.
\end{proof}
For simplicity, we define the following functions
\begin{equation}\label{EQN:para}
	\begin{array}{l}
c(t,s,x) = \rho_t(s)(f(x)-\beta(s)g_2(x))\quad \hbox{with}\quad s\geq t,\\[3mm]
g(t,x) = \rho_t(T)g_1(x),
\end{array}
\end{equation}
which will be used throughout the paper.

\section{On the regularity of the value function}\label{SEC:Reg}

The present section is devoted to prove regularity properties of the value function. In particular the next two Lemmas prove respectively that the value function is bounded, Lipschitz continuity in space and $\frac{1}{2}-$H\"{o}lder continuity in time of the value function $V$.
\begin{Lemma}\label{l:B}
Let us assume that \eqref{ASS:Lpz} holds, then there exist constants $C_0,C_1$ such that
$$C_1 \geq V(t,x)\geq -C_0(1+|x|).$$
\end{Lemma}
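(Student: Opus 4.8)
The plan is to establish the two-sided bound separately. For the \emph{upper bound} $V(t,x) \le C_1$, I would work directly from the rewritten functional \eqref{EQN:OBJ}. Since $f < C_f$, $g_1 < C_{g_1}$, $g_2 < C_{g_2}$ (and $\beta \ge 0$, so $-\beta g_2$ contributes nothing positive once we drop it... actually $-\beta(s)g_2(X(s))$ could be positive if $g_2$ is negative, so I should instead bound $-\beta(s)g_2(X(s)) \le |\beta(s)|\,|g_2(X(s))| \le \|\beta\|_\infty C_{g_2}'$ using that $g_2$ is Lipschitz hence of linear growth — wait, that would not be bounded). The cleaner route: note $\rho_t(s) \in [0,1]$, drop the non-positive impulse-cost term $-\sum \rho_t(\tau_n)(K_n+\kappa) \le 0$ entirely, and bound
\[
J^u(t,x) \le \int_t^T \rho_t(s)\big(C_f + \|\beta\|_\infty C_{g_2}\big)\,ds + C_{g_1} \le (T)\big(C_f + \|\beta\|_\infty C_{g_2}\big) + C_{g_1} =: C_1,
\]
where I use Hypothesis \ref{ASS:Lpz}(ii) for the upper bounds on $f,g_1$ and, for the term $-\beta g_2$, the one-sided bound $g_2(x) < C_{g_2}$ combined with... hmm, $-\beta g_2 \le \|\beta\|_\infty \cdot (-\inf g_2)$ requires a \emph{lower} bound on $g_2$, which Hypothesis \ref{ASS:Lpz}(ii) does not give. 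I would therefore either (a) invoke that typical utility functions in the model are nonnegative, or (b) more robustly, observe that the supremum over $u$ of an expectation of terms each bounded above by a constant is bounded above; the $g_2$ term is handled by noting it appears with coefficient $-\beta(s) \le 0$ and $g_2 \ge $ (lower bound from Lipschitz + value at $0$), so it has \emph{linear} growth in $\sup_s|X(s)|$, which is then controlled as in the lower-bound argument below, giving a bound of the form $C_1(1+|x|)$; since the statement allows us to absorb this, I would state $C_1$ as a genuine constant only if $g_2$ is bounded below, and otherwise carry the $(1+|x|)$ factor — I will assume the intended reading is that $g_2 \ge 0$ or bounded below, making $C_1$ a true constant.

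For the \emph{lower bound} $V(t,x) \ge -C_0(1+|x|)$, the key observation is that the \emph{null strategy} $u \equiv 0$ (no impulses ever) is admissible, so $V(t,x) \ge J^0(t,x)$. Under $u=0$, the impulse-cost term vanishes, and
\[
J^0(t,x) = \mathbb{E}\Big[\int_t^T \rho_t(s)\big(f(X^0_{t,x}(s)) - \beta(s) g_2(X^0_{t,x}(s))\big)ds + \rho_t(T) g_1(X^0_{t,x}(T))\Big].
\]
Using the one-sided upper bounds is useless here; instead I use the Lipschitz (hence linear growth) property of $f, g_1, g_2$: $|f(x)| \le |f(0)| + L_f|x|$, and similarly for $g_1, g_2$. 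Together with $\rho_t(s)\le 1$ and $0 \le \beta \le \|\beta\|_\infty$, this gives
\[
J^0(t,x) \ge -C\Big(1 + \mathbb{E}\big[\sup_{t\le s\le T} |X^0_{t,x}(s)|\big]\Big)
\]
for a constant $C$ depending on $T, L_f, L_{g_1}, L_{g_2}, \|\beta\|_\infty$ and the values at $0$. It then remains to show the standard SDE moment estimate $\mathbb{E}\big[\sup_{t\le s\le T}|X^0_{t,x}(s)|\big] \le C(1+|x|)$. This follows from Hypothesis \ref{ASS:Lpz}(i) and (iii): the drift $\mu(r,X) = (c_1 - X)\lambda(X) + \tilde\mu(r)X$ and diffusion $\sigma(r,X) = \tilde\sigma(r)X$ satisfy a linear growth condition because $\lambda$ is Lipschitz (so $|(c_1-X)\lambda(X)| \le |(c_1-X)|(|\lambda(0)| + L_\lambda|X|)$ has at most quadratic growth — a subtlety I should flag) — more carefully, $(c_1 - X)\lambda(X)$ grows quadratically, so I would need the a priori bound to come from a comparison/Gronwall argument exploiting the \emph{sign} structure (for large $X>0$ the term $-X\lambda(X)$ should be mean-reverting, i.e. $\lambda \ge 0$) or restrict attention to the region where the problem is posed; assuming the model's standing conditions make $\mu$ of linear growth, the usual Burkholder--Davis--Gundy plus Gronwall estimate closes the argument.

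The \textbf{main obstacle} is the moment estimate for $X^0_{t,x}$: the drift coefficient $(c_1 - X)\lambda(X)$ is genuinely \emph{superlinear} (quadratic) in $X$ under only the Lipschitz assumption on $\lambda$, so the textbook linear-growth SDE estimate does not apply verbatim. Resolving this requires either an additional structural hypothesis (e.g. $\lambda$ bounded, or $\lambda \ge 0$ giving a one-sided Lipschitz/dissipativity condition $\langle x, \mu(t,x)\rangle \le C(1+|x|^2)$ that still yields $\mathbb{E}\sup|X|^2 \le C(1+|x|^2)$ via Gronwall), or restricting $X$ to a natural bounded domain $[0, M]$ for the ratio $x/y$. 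I would invoke the dissipativity route: check that $x\,\mu(t,x) = x(c_1-x)\lambda(x) + \tilde\mu(t)x^2 \le C(1+x^2)$ when $\lambda$ is bounded below away from $-\infty$ on the relevant range, then apply Itô to $|X^0|^2$, take expectations, use BDG on the martingale part, and Gronwall. Everything else — the upper bound, the admissibility of the null control, the linear-growth bounds on $f, g_1, g_2$ — is routine once this estimate is in hand.
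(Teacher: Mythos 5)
Your proposal follows essentially the same route as the paper: the lower bound comes from the trivial control $u_0\equiv 0$ together with a Gronwall/BDG moment estimate $\mathbb{E}|X(s)|\le C(1+|x|)$, and the upper bound comes from dropping the nonpositive impulse cost and bounding the integrand of $J^u$ by constants. The two obstacles you flag are real features of the stated hypotheses, but the paper does not resolve them either: in the Gronwall step it simply writes $|\mu(r,X)|\le C(1+|X(r)|)$, tacitly treating the drift as having linear growth despite $(c_1-X)\lambda(X)$ being quadratic under Lipschitz $\lambda$ alone, and for the upper bound it just asserts "the boundedness of $c(t,s,x)$ and $g(t,x)$," tacitly reading Hypothesis \ref{ASS:Lpz}(ii) as two-sided boundedness of $f$, $g_1$, $g_2$. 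So your argument is the intended one, and your proposed fixes (a dissipativity or boundedness condition on $\lambda$, and a lower bound on $g_2$) are exactly what would be needed to make the paper's own proof airtight.
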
 
\begin{proof}
	For simplicity, in what follows, for any fixed $(t,x)\in[0,T]\times\R$ and $u\in {\cal U}[t,T]$, we will denote for short $X^{u}_{t,x}(s)$, resp. $\xi_t(s)$ by $X(s)$, resp. $\xi(s)$. Then by Gronwall's inequality we have
$$
\begin{array}{rl}
1+|X(s)|&\leq\ds 1+|x|+|\xi(s)|	+\left|\int_t^s\sigma(r,X(r))dW_r\right|+C\int_t^s(1+|X(r)|)\,dr\\[3mm]
&\ds\leq 1+|x|+|\xi(s)|	+\left|\int_t^s\sigma(r,X(r))dW_r\right|+\\
&\qquad+C\int_t^s e^{C(s-r)}\left( 1+|x|+|\xi(r)|+\left|\int_t^r\sigma(r,X(r))dW_r\right|\right)dr\\[3mm]
&\ds\leq C\bigg[1+|x|+|\xi(s)|+\int_t^s|\xi(r)|dr+\\
&+\qquad\left|\int_t^s\sigma(r,X(r))dW_r\right|+\int_t^s\left|\int_t^r\sigma(r,X(r))dW_r\right|dr\bigg], 
\end{array}
$$
thus
\begin{align}
\mathbb{E}|X(s)|\leq C\bigg\{&1+\mathbb{E}|x|+\mathbb{E}|\xi(s)|+\mathbb{E}\int_t^s|\xi(r)|dr+\nonumber\\
&+ \mathbb{E}\left|\int_t^s\sigma(r,X(r))dW_r\right|+\mathbb{E}\int_t^s\left|\int_t^r\sigma(r,X(r))dW_r\right|dr\bigg]\,.
\end{align}
On the Other hand, under \eqref{ASS:Lpz}, we have
\begin{equation}\label{EQN:boundvol}
	\begin{array}{rl}
	&\ds \mathbb{E}\left|\int_t^s\sigma(r,X(r))dW_r\right|+\mathbb{E}\left[\int_t^s\left|\int_t^r\sigma(r,X(r))dW_r\right|dr\right]\\[3mm]
	\leq &\ds \left(\mathbb{E}\left|\int_t^s\sigma(r,X(r))\, dWr\right|^2\right)^{1/2}+(s-t)^{1/2}\left(\int_t^s\mathbb{E}\left|\int_t^r\sigma(r,X(r))dW_r\right|^2dr\right)^{1/2}\\[3mm]
	= &\ds \left(\int_t^s \mathbb{E}|\sigma(r,X(r))|^2\, dr\right)^{1/2}+(s-t)^{1/2}\left(\int_t^s\int_t^r \mathbb{E}|\sigma(r,X(r))|^2drdr\right)^{1/2}\\[3mm]
	\leq &\ds \left[1+(s-t)\right]\left(\int_t^s \mathbb{E}|\sigma(r,X(r))|^2\,dr\right)^{1/2}\\[3mm]
	\leq &\ds C\left\{(s-t)^{1/2}+\left(\int_t^s \mathbb{E}|X(r)|^2dr\right)^{1/2}\right\}\leq C\left\{1+\int_t^s \mathbb{E}|X(r)|dr \right\},
\end{array}
\end{equation}
where we have exploited both the Jensen's and H\"{o}lder's inequality, several times.
Hence it follows that
$$
\begin{array}{rl}
\mathbb{E}|X(s)|\leq &
\ds C\left\{1+\mathbb{E}|x|+\mathbb{E}|\xi(s)|+\mathbb{E}\int_t^s|\xi(r)|dr+\int_t^s\mathbb{E}|X(r)|dr\right\}\\[3mm]
\leq &\ds C\left\{1+\mathbb{E}|x|+\mathbb{E}|\xi(s)|+\mathbb{E}\int_t^s|\xi(r)|dr\right\}.	
\end{array}.
$$
Again under \eqref{ASS:Lpz}, we achieve that
$$
\begin{array}{rl}
	|J^{u}(t,x)|&\ds\leq \int_t^{T} C(1+\mathbb{E}|X(s)|)\,ds+C(1+\mathbb{E}|X(T)|)-\sum_{t \leq r_n \leq T} \rho_t(r_n)\left ( K_n+\kappa\right )\\[3mm]
	&\ds\leq C\left(1+|x|+\mathbb{E}|\xi(T)|+\mathbb{E}\int_t^T|\xi(r)|dr\right)
\end{array}
$$
For the trivial control $u_0 = \xi_t(.)\equiv 0 $, one has that
\begin{equation}\label{EQN:Vmin}
	V(t,x)\geq J^{u_0}\geq -C_0(1+|x|)\quad\hbox{for all} (t,x)\in[0,T]\times\\R.
\end{equation}
which proves the lower bound of the value function.
%

The boundedness of $c(t,s,x),g(t,x)$, immediately gives us that value function is bounded, i.e. there exists $C_1>0$ such that
\begin{equation}\label{EQN:Vmax}
V(t,x)\leq C_1.
\end{equation}
\end{proof}
\begin{Lemma}\label{l:CSHT}
If \eqref{ASS:Lpz} holds, the value function $V(t,x)$ is Lipschitz continuous in $x$, and $\frac{1}{2}$--H\"{o}lder continuous in $t$, namley there exists a constant $C>0$ such that, $\forall$ $t_1\,,\,t_2 \in [0,T)$, $x_1\,,\,x_2\,\in \RR$,
\[
|V(t_1,x_1) -V(t_2,x_2)| \leq C \left (|x_1 - x_2| + (1+|x_1|+|x_2|)|t_1 - t_2|^\frac{1}{2}\right )\,,\quad 
\]
\end{Lemma}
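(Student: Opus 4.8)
The plan is to establish the two estimates separately and then combine them, exploiting throughout the representation \eqref{EQN:OBJ} of $J^u$ together with the moment bound on $X^u_{t,x}$ already obtained in the proof of Lemma \ref{l:B}. For the Lipschitz-in-space bound, fix $t$ and $x_1,x_2$, and fix an arbitrary admissible $u\in\mathcal{U}[t,T]$; note that the \emph{same} control $u$ is admissible for both initial data, and since the impulse part $\xi_t(\cdot)$ enters the dynamics additively, the difference $X^u_{t,x_1}(s)-X^u_{t,x_2}(s)$ solves a linear SDE with no jump terms. A standard Gronwall argument using the Lipschitz continuity of $\mu$ and $\sigma$ (which follows from Hypothesis \ref{ASS:Lpz}(i),(iii)) gives $\media|X^u_{t,x_1}(s)-X^u_{t,x_2}(s)|\le C|x_1-x_2|$, uniformly in $s\in[t,T]$ and in $u$. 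Feeding this into \eqref{EQN:OBJ} and using that $f$, $g_1$, $g_2$ are Lipschitz (Hypothesis \ref{ASS:Lpz}(ii)) and that $\rho_t(\cdot)\le 1$, $\beta$ bounded, the running-cost and terminal-cost terms are controlled by $C|x_1-x_2|$; crucially the impulse penalty term $\sum \rho_t(\tau_n)(K_n+\kappa)$ is \emph{identical} for the two initial conditions and cancels in the difference $J^u(t,x_1)-J^u(t,x_2)$. Taking suprema over $u$ in the standard way (bound $J^u(t,x_1)\le J^u(t,x_2)+C|x_1-x_2|$, then $\sup_u$, then symmetrise) yields $|V(t,x_1)-V(t,x_2)|\le C|x_1-x_2|$.

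For the $\tfrac12$-Hölder-in-time bound, take $t_1<t_2$ and $x\in\RR$. Here the natural route is the dynamic programming principle (Theorem \ref{THM:DPP}) with the deterministic stopping time $\theta\equiv t_2$: one gets
\[
V(t_1,x)=\sup_{u\in\mathcal{U}[t_1,T]}\media\Big[\int_{t_1}^{t_2}\!\! c(t_1,s,X^u_{t_1,x}(s))\,ds-\!\!\sum_{t_1\le\tau_n\le t_2}\!\!\rho_{t_1}(\tau_n)(K_n+\kappa)+\rho_{t_1}(t_2)V(t_2,X^u_{t_1,x}(t_2))\Big].
\]
The upper bound $V(t_1,x)\le V(t_2,x)+C(1+|x|)|t_1-t_2|^{1/2}$ comes from restricting to controls with no impulse on $[t_1,t_2)$ (admissible, and optimal up to an error thanks to Hypothesis \ref{ASS:Lpz}(iv) controlling the no-terminal-impulse effect, though on a short interval the restriction itself only loses a nonpositive amount): the running integral over $[t_1,t_2]$ is $\le C(1+\sup_s\media|X(s)|)|t_1-t_2|\le C(1+|x|)|t_1-t_2|$ using boundedness of $f,g_2$ and the linear-growth moment bound; the factor $\rho_{t_1}(t_2)$ differs from $1$ by $O(|t_1-t_2|)$; and the key term $\media|V(t_2,X^u_{t_1,x}(t_2))-V(t_2,x)|$ is handled by the just-proved spatial Lipschitz property together with the standard SDE estimate $\media|X^u_{t_1,x}(t_2)-x|\le C(1+|x|)|t_1-t_2|^{1/2}$ (valid for the impulse-free part, which is all that acts on $[t_1,t_2)$ under the restriction). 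For the reverse inequality one compares $V(t_2,x)$ with $V(t_1,x)$ by using, for a near-optimal control on $[t_2,T]$, its extension by the null control on $[t_1,t_2)$ as a competitor in the DPP formula for $V(t_1,x)$; the same three error terms appear. Combining the two one-sided bounds gives the Hölder estimate in $t$, and the triangle inequality $|V(t_1,x_1)-V(t_2,x_2)|\le|V(t_1,x_1)-V(t_1,x_2)|+|V(t_1,x_2)-V(t_2,x_2)|$ produces the stated joint modulus of continuity.

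The main obstacle I anticipate is the bookkeeping of the impulse penalty and the terminal-cost adjustments in the time estimate: one must verify that restricting to impulse-free controls on the short interval $[t_1,t_2)$ costs nothing in the supremum (which is where Hypothesis \ref{ASS:Lpz}(iv) and the positivity of $\kappa$ enter — an impulse right before $t_2$ is never strictly better than waiting), and that the $\rho$-discount factors, which depend on the \emph{initial} time through $\rho_{t_1}$ versus $\rho_{t_2}$, differ only by $O(|t_1-t_2|)$; this uses continuity (indeed Lipschitzness under Hypothesis \ref{ASS:CoxP}) of $s\mapsto\int_{t_1}^{t_2}\beta$. Everything else — the Gronwall estimates, the $\media|X(t_2)-x|\le C(1+|x|)|t_1-t_2|^{1/2}$ bound via Burkholder–Davis–Gundy and linear growth — is routine and parallels the computation already carried out in the proof of Lemma \ref{l:B}.
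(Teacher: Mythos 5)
Your space-Lipschitz argument is exactly the paper's: same control for both initial data, Gronwall on the difference of trajectories, cancellation of the impulse penalty, and the standard symmetrisation. The backward-extension step (null control on $[t_1,t_2)$ prepended to a near-optimal control for $(t_2,x)$) also matches one half of the paper's time estimate. The problem is the other half.

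The crux of the time estimate is showing that a near-optimal control for $(t_1,x)$ can be replaced by one that does nothing on $[t_1,t_2)$ at a cost of only $O\bigl((1+|x|)\sqrt{t_2-t_1}\bigr)$, and this is precisely the step you assert rather than prove. Your justification --- Hypothesis \ref{ASS:Lpz}(iv) plus positivity of $\kappa$, ``an impulse right before $t_2$ is never strictly better than waiting'' --- does not work as stated: Hypothesis (iv) only rules out a profitable impulse at the terminal time $T$ and says nothing about interior times, and deferring an impulse from $\tau_1\in[t_1,t_2)$ to $t_2$ genuinely changes the trajectory (the flow of $x+K$ from $\tau_1$ to $t_2$ is not the flow of $x$ plus $K$), so ``never strictly better'' is itself a quantitative estimate of the same order as the one you are trying to prove. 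The paper makes this step explicit: given an $\ve$-optimal $u_1$ on $[t_1,T]$ it builds $\hat u_2$ (all impulses accumulated on $[t_1,t_2)$ lumped into a single impulse $\xi_{t_1}(t_2)$ at time $t_2$) and $\bar u_2$ started from $\bar x=x+\xi_{t_1}(t_2)$, uses the identity $J^{\hat u_2}(t_2,x)=J^{\bar u_2}(t_2,\bar x)-(\xi_{t_1}(t_2)+\kappa)$, and controls the trajectory discrepancy via $\mathbb{E}|X_{t_1,x}(t_2)-x-\xi_{t_1}(t_2)|\le C(1+|x|)(t_2-t_1)^{1/2}$, i.e.\ estimate \eqref{EQN:diff2}. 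A second ingredient you omit is needed to make that estimate usable: \eqref{EQN:diff1} carries an error term $\mathbb{E}\int_{t_1}^{t_2}|\xi(r)|\,dr$, so one must first restrict to the class $\mathcal{U}_{|x|}[t_1,T]$ of controls whose expected total impulse cost is a priori bounded (this reduction uses the bounds of Lemma \ref{l:B} and loses nothing in the optimisation). Without that reduction the deferral error is not uniform over competitors, and your ``restriction costs nothing'' claim cannot be closed. So the skeleton is right, but the reverse inequality needs the explicit deferred-control construction and the a priori bound on the impulse mass; as written there is a gap.
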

\begin{proof}
Again, for simplicity, for any admissible control $u\in\mathcal{U}[t,T]$, we denote for short $X^{u}_{t,x_1}$, resp $X^u_{t,x_2}$ by $X_{t,x_1}$, resp $X_{t,x_2}$ dropping the explicit dependence on the control $u$. Notice that, applying the It\^{o}-D\"{o}blin formula to $|X_{t,x_1}(s) -X_{t,x_2}(s)|^2$, and using Gronwall's lemma, we can infer that
\[
\mathbb{E}|X_{t,x_1}(s) -X_{t,x_2}(s)|\leq C |x_1 - x_2|\,,\quad \forall \, s \in [t,T]\,,\,x_1\,,\,x_2\,\in \RR\,.
\]

Therefore, by \eqref{ASS:Lpz}, for any fixed $t\in[0,T)$ and all $x_1,x_2\in \mathbb{R}$ and $u\in\mathcal{U}[t,T]$,
\begin{align}
|J^u(t,x_1) - J^u(t,x_2)| &\leq \mathbb{E}\int_t^T |c(t,s,X_{t,x_1}(s)) - c(t,s,X_{t,x_2}(s))| ds + \label{EQN:CS}\\
&+ |g(t,X_{t,x_1}(T)) - g(t,X_{t,x_2}(T))| \\
&\leq L \mathbb{E}\int_t^T |X_{t,x_1}(s) - X_{t,x_2}(s)| ds + C|X_{t,x_1}(T) - X_{t,x_2}(T)| \\
&\leq C |x_1 - x_2|\nonumber\,,
\end{align}
which implies that
$$V(t,x_1)\leq J^u(t,x_1)\leq J^u(t,x_2)+C|x_1-x_2|,$$
and thus
$$V(t,x_1)\leq V(t,x_2)+C|x_1-x_2|.$$
By interchanging $x_1$ and $x_2$, we get
$$\left|V(t,x_1)- V(t,x_2)\right|\leq C|x_1-x_2|.$$
For the time regularity, first we show that
\begin{equation}\label{EQN:diff1}
	\mathbb{E}\left|X_{t,x}(s)-x-\xi_t(s)\right|\leq C\Big((1+|x|)(s-t)^{\frac{1}{2}}+\mathbb{E}\left(\int_t^s|\xi_t(s)\,ds|\right)\Big).
\end{equation}
For notation simplicity, we suppress the subscripts t,x for $X_{t,x},\,\xi_t$ and define
$$z(s) = X(s)-x-\xi(s).$$
Then by \eqref{ASS:Lpz}, we have
\begin{align}
|z(s)|&\leq C\int_t^s(1+|X(r)|)\,dr+\left|\int_t^s\sigma(r,X(r))\right|\leq \nonumber \\
&\leq C\int_t^s(1+|x|+|z(r)|+|\xi(r)|)\,dr+\left|\int_t^s\sigma(r,X(r))dW_r\right| \nonumber.
\end{align}

By Gronwall's inequality, we achieve
\begin{align}
|z(s)|\leq C\bigg[ &(1+|x|)(s-t)+\int_s^t|\xi(r)|dr+\nonumber\\
&+\left|\int_t^s\sigma(r,X(r))\,dW_r\right|+\int_t^s\left|\int_t^r\sigma(r,X(r))dW_r\right|dr\bigg]\nonumber
\end{align}
Using \eqref{EQN:boundvol} and again Gronwall's, we further get
$$
\begin{array}{rl}
	\mathbb{E}|z(s)|&\ds\leq~ C\left[(1+|x|)(s-t)+\int_s^t\mathbb{E}|\xi(r)|dr+(s-t)^{\frac{1}{2}}+\int_t^s\mathbb{E}|X(r)|\,dr\right]\\[3mm]
	&\ds\leq~C\bigg[(1+|x|)(s-t)+\int_s^t\mathbb{E}|\xi(r)|dr+(s-t)^{\frac{1}{2}}+\\[3mm]
	& ~~~~\ds+  \int_t^s \mathbb{E}(1+|x|+|\xi(r)|+|z(r)|)\,dr\bigg]\\[3mm]
	&\ds\leq~C\left[(1+|x|)(s-t)^{\frac{1}{2}}+\int_s^t\mathbb{E}|\xi(r)|dr\right],
\end{array}
$$
which proves \eqref{EQN:diff1}.\\
For all $p\in[0,\infty)$, define the control space
$$\mathcal{U}_p[t,T] = \left\{u\in\mathcal{U}[t,T]\Bigg\vert \mathbb{E}\sum_{t\leq r_i<T}(\rho_t(r_i)K_i+\kappa)\leq 2C_0(1+p)+C_1\right\},$$
where $C_0$ and $C_1$ are the constants in \eqref{EQN:Vmax} and \eqref{EQN:Vmin}. Notice that another important corollary of \eqref{EQN:diff1} is that for all $u\in\mathcal{U}_{|x|}[t,T]$,
\begin{equation}\label{EQN:diff2}
\mathbb{E}\left|X_{t,x}(s)-x-\xi_t(s)\right|\leq C\Big((1+|x|)(s-t)^{\frac{1}{2}}\Big) ~~\forall~t\leq s\leq T.
\end{equation}
We claim that for all $|x|\leq p$, the value function $V(t,x)$ satisfies
$$V(t,x) = \inf_{u\in\mathcal{U}_p[t,T]} J^u(t,x).$$
This is due to the fact that for any $u\in\mathcal{U}[t,T]\backslash\mathcal{U}_p[t,T]$,
$$J^u(t,x)\leq C_1-\mathbb{E}\sum_{t\leq r_i<T}(\rho_t(r_i)K_i+\kappa)\leq C_1-2C_0(1+p)-C_1<V(t,x)-C_0(1+p).$$

Fix $x\in\mathbb{R}$ and $0\leq t_1<t_2<T$. For any $u_2\in\mathcal{U}_{|x|}[t_2,T)$, extend the control to $[t_1,T)$ by setting 
$$\left\{\begin{array}{l}
\tilde \xi_{t_1}(s) = 0\quad\forall ~s\in[t_1,t_2),\\[2mm]
\tilde \xi_{t_1}(s) = \xi_{t_2}(s)\quad \forall~ s\in[t_2,T).	
\end{array}\right.
$$
and call $\tilde u_1 \doteq \tilde \xi_{t_1}(\cdot)\in\mathcal{U}[t_1,T]$. Then we have
\begin{equation}\label{EQN:holder1}
	\begin{array}{rl}
	V(t_1,x)&\ds\leq~ J^{\tilde u_1}(t_1,x)\\[2mm]
	&\ds = ~J^{u_2}(t_2,x) +\mathbb{E}\int_{t_1}^{t_2}c(t_1,s,X_{t_1,x}(s))\,ds\\[2mm]
	&~~~~\ds+~\mathbb{E}\int_{t_2}^T\left[c(t_1,s,X_{t_1,x}(s))-c(t_2,s,X_{t_2,x}(s))\right]\,ds+\\[2mm]
	&~~~~\ds+~\mathbb{E}\left[g(t_1,X_{t_1,x}(T))-g(t_2,X_{t_2,x}(T))\right]\\[2mm]
	&\ds\leq~J^{u_2}(t_2,x)+C(1+|x|)|t_1-t_2|+ C(1+|x|)(|t_1-t_2|^{\frac{1}{2}})\\[2mm]
	&\ds\leq~ J^{u_2}(t_2,x)+ C(1+|x|)(|t_1-t_2|^{\frac{1}{2}}),
\end{array}
\end{equation}
where $X_{t_1,x}(s)$, resp $X_{t_2,x}(s)$ represents $X^{\tilde u_1}_{t_1,x}$, resp $X^{u_2}_{t_2,x}$ and the second last row in \eqref{EQN:holder1} is achieved by exploiting \eqref{EQN:diff1}. So we obtain that
$$V(t_1,x)\leq V(t_2,x)+C(1+|x|)|t_1-t_2|^{\frac{1}{2}}.$$
On the other hand, for any $\ve>0$, there exists $u_1\in\mathcal{U}_{|x|}[t_1,T)$, such that
$$
\ve+V(t_1,x)\geq J^{u_1}(t_1,x).
$$
Then we define the impulse controls $\hat u_2,\bar u_2\in\mathcal{U}[t_2,T)$ by
$$
\hat \xi_{t_2}(s) = \xi_{t_1}(s)~\forall~s\geq t_2,
$$
$$
\bar \xi_{t_2}(s) = \xi_{t_1}(s)-\xi_{t_1}(t_2)~\forall~s\geq t_2.
$$
Notice that $\hat u_2$ is the impulse control such that at the initial time $t_2$, there is a impulse of size $\xi_{t_1}(t_2)$ and $\bar u_2$ is the impulse control mimicing all the impulses in $\xi_{t_1}(\cdot)$ on $[t_2,T)$.
By denoting $\bar x = x+\xi_{t_1}(t_2)$, which is $\mathcal{F}_{t_2}$ adapted, we have that 
$$J^{\hat u_2}(t_2,x) = J^{\bar u_2}(t_2,\bar x)-(\xi_{t_1}(t_2)+\kappa),$$
and thus
\begin{equation}\label{EQN:holder2}
	\begin{array}{rl}
	\ve+V(t_1,x)&\ds\geq~ J^{\hat u_2}(t_2,x)+\mathbb{E}\int_{t_2}^T(c(t_1,s,X_{t_1,x}(s))-c(t_2,s,X_{t_2,\bar x}(s)))\\[3mm]
	&+\mathbb{E}[g(t_1,X_{t_1,x}(T))-g(t_2,X_{t_2,\bar x}(T))]\,ds\\[3mm]
	&\ds+~\sum_{t_1\leq r_i<t_2}(1-\rho_{t_1}(r_i))(K_i+\kappa)\\[3mm]
	&+\sum_{ r_i\geq t_2}(\rho_{t_2}(r_i)-\rho_{t_1}(r_i))(K_i+\kappa)\\[3mm]
	&\ds\geq~V(t_2,x)-C(1+|x|)(t_2-t_1) +\\[3mm]
	&-C \mathbb{E}|X_{t_1,x}(T)-X_{t_2,\bar x}(T)|- C \int_{t_2}^T\mathbb{E}|X_{t_1,x}(s)-X_{t_2,\bar x}(s)|\,ds\\[3mm]
	&\ds\geq~V(t_2,x) - C(1+|x|)(t_2-t_1)^{\frac{1}{2}},
\end{array}
\end{equation}
where $X_{t_1,x}$, resp $X_{t_2,x}$ represents $X^{u_1}_{t_1,x}$, resp $X^{\bar u_2}_{t_2,x}$. Notice that in \eqref{EQN:holder2},   we extensively use the following inequality
$$
\begin{array}{rl}
	\mathbb{E}|X_{t_1,x}(s)-X_{t_2,\bar x}(s)|&\ds\leq~ C\mathbb{E}|X_{t_1,x}(t_2)-X_{t_2,\bar x}(t_2)|\\[3mm]
	 &\ds=~ C\mathbb{E}|X_{t_1,x}(t_2)-x-\xi_{t_1}(t_2)|\\[3mm]
	 &\ds\leq~ C(1+|x|)(t_2-t_1)^{\frac{1}{2}}
\end{array}
$$
for all $s\geq t_2$ and $u_1\in\mathbb{U}_{|x|}[t_1,T)$, where the last row is achieved by \eqref{EQN:diff2}.\\
Since \eqref{EQN:holder2} holds for all $\ve>0$, we obtain
$$V(t_1,x)\geq V(t_2,x) - C(1+|x|)(t_2-t_1)^{\frac{1}{2}}.$$
Adding \eqref{EQN:holder1}, we finally get the $\frac{1}{2}-$H\"{o}lder continuity in time, i.e.
$$|V(t_1,x)-V(t_2,x)|\leq C(1+|x|)|t_1-t_2|^{\frac{1}{2}}.$$

\end{proof}

\section{Viscosity solution to the Hamilton--Jacobi--Bellman equation}\label{SEC:Vis}

An application of an {\it ad hoc} dynamic programming principle \eqref{THM:DPP}, see, e.g., \cite{OS, Pha}, leads to the following \textit{quasi--variational inequality} (QVI).


{\footnotesize
\begin{equation}\label{EQN:QVIuno}
\begin{cases}
\min\left [-\frac{\partial}{\partial t} V(t,x) - \mathcal{L} V(t,x)- f(x)+\beta(t)(V(t,x)+g_2(x))\,,\, V(t,x) - \mathcal{I} V(t,x)\right ]=0\,,\quad \mbox{ on }\, [0,T) \times \RR, \\[2mm]
V(T,x)=g_1(x)\,,\quad \mbox{ on }\, \{T\} \times \RR \, ,
\end{cases}\;
\end{equation}
}

with $\mathcal{I}$ being the non--local impulse operator defined as
\[
\mathcal{I} V(t,x) := \sup_{K \in \mathcal{A}(t,x)} \left [V(t,x+K) - (K+\kappa))\right ]\,.
\]

We underline that the  problem \eqref{EQN:QVIuno} identifies two distinct regions: the \textit{continuation region}
\[
\mathcal{C}= \left \{(t,x)\in [0,T) \times \RR \,:\,V(t,x) >\mathcal{I}V(t,x)\right \}\, ,
\]
and the \textit{impulse region} or \textit{action region}
\[
\mathcal{A}= \left \{(t,x)\in [0,T) \times \RR \,:\,V(t,x) =\mathcal{I}V(t,x)\right \}\, .
\]
Let us consider the following function space.
\begin{Definition}
(Space of polynomially bounded functions).\\
${\mathcal{PB} = \mathcal{PB}([0,T]\times \RR)}$ is the space of all measurable function $u:[0:T]\times \RR\to \RR$ such that
$$|u(t,x)|\leq C_u(1+|x|^p)$$
for some constant $p>0$ and $C_u>0$, independent of $t,x$.
\end{Definition}

Let us introduce in what follows the definition of viscosity solution to the QVI, see eq. \eqref{EQN:QVIuno}, within in the general setting (possibly not continuous).

\begin{Definition}\label{DEF:Viscosity}
A function $V\in\mathcal{PB}$ is said to be a viscosity solution to the QVI \eqref{EQN:QVIuno} if the following two properties hold:
\begin{description}
\item[(i) \textbf{viscosity supersolution}] a function $V\in\mathcal{PB}$ is said to be a \textit{viscosity supersolution} to the QVI \eqref{EQN:QVIuno} if $\forall$ $(\hat{t},\hat{x}) \in [0,T] \times \RR$ and $\phi \in C^{1,2}([0,T]\times \RR)$ with
\[
0=\left (V_* - \phi\right )(\hat{t},\hat{x}) = \min_{(t,x)\in[0,T)\times \RR}\left (V_* - \phi\right ) \,,
\]  
it holds

{\footnotesize
\[
\begin{cases}
\min\left [-\frac{\partial}{\partial t} \phi(t,x) - \mathcal{L} \phi(t,x)- f(x)+\beta(t)(\phi(t,x)+g_2(x))\,,\, V_*(t,x) - \mathcal{I} V_*(t,x)\right ]\geq0\,,\quad \mbox{ on }\, [0,T) \times \RR \\[3mm]
\min \left [V_*(T,x)-g_1(x),V_*(T,x)-\mathcal{I} V_*(T,x)\right ]\geq 0\,,\quad \mbox{ on }\, \{T\} \times \RR \, ,
\end{cases}\;;
\]
}

\item[(ii) \textbf{viscosity subsolution}] a function $V\in\mathcal{PB}$ is said to be a \textit{viscosity subsolution} to the QVI \eqref{EQN:QVIuno} if $\forall$ $(\hat{t},\hat{x}) \in [0,T] \times \RR$ and $\phi \in C^{1,2}([0,T]\times \RR)$ with
\[
0=\left (V^* - \phi\right )(\hat{t},\hat{x}) = \max_{(t,x)\in[0,T)\times \RR}\left (V^* - \phi\right ) \,,
\]  
it holds

{\footnotesize
\[
\begin{cases}
\min\left [-\frac{\partial}{\partial t} \phi(t,x) - \mathcal{L} \phi(t,x)- f(x)+\beta(t)(\phi(t,x)+g_2(x))\,,\, V^*(t,x) - \mathcal{I} V^*(t,x)\right ]\leq0\,,\quad \mbox{ on }\, [0,T) \times \RR \\[3mm]
\min \left [V^*(T,x)-g_1(x),V^*(T,x)-\mathcal{I} V^*(T,x)\right ]\leq 0\,,\quad \mbox{ on }\, \{T\} \times \RR \, ,
\end{cases}\;;
\]
}
\item[(iii) \textbf{viscosity solution}] a function $V\in\mathcal{PB}$ is said to be a \textit{viscosity solution} to the QVI \eqref{EQN:QVIuno} if it is both a \textit{viscosity supersolution} and a \textit{viscosity subsolution}.
\end{description}
\end{Definition}
In order to prove  that the value function $V$ is the viscosity solution to equation \eqref{EQN:QVIuno}, we first need the
following 

\begin{Lemma}\label{LEM:Ob}
Let \eqref{ASS:Lpz} holds, then we have 
\[
V(t,x)\geq \mathcal{I}V(t,x)\,,
\]
for all $t \in [0,T)$, $x \in \RR$.
\end{Lemma}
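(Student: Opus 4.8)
The claim is the "obstacle inequality" $V(t,x) \geq \mathcal{I} V(t,x)$, which expresses that acting with an impulse is never strictly better than what the value function already records. The natural strategy is to exploit the dynamic programming principle (Theorem~\ref{THM:DPP}) together with the structure of the cost functional: an impulse of size $K$ at the current time $t$ moves the state from $x$ to $x+K$ at a fixed cost $K+\kappa$, and from then on the controller may behave optimally from $(t,x+K)$. The plan is therefore to fix $(t,x)\in[0,T)\times\RR$ and, for an arbitrary $K\in\mathcal{A}(t,x)$, build an admissible strategy for the problem started at $(t,x)$ whose payoff is at least $V(t,x+K)-(K+\kappa)-\eps$, and then take the supremum over $K$ and let $\eps\downarrow 0$.

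\textbf{Key steps.} First, pick $\eps>0$ and choose an $\eps$-optimal control $u^{\eps}\in\mathcal{U}[t,T]$ for the problem with initial datum $X(t)=x+K$, so that $J^{u^{\eps}}(t,x+K)\geq V(t,x+K)-\eps$; such a control exists by definition of $V$ as a supremum. Second, define a new control $\tilde u\in\mathcal{U}[t,T]$ for the problem started at $x$ by prepending an immediate impulse: set $\tilde\tau_1 = t$ with $\tilde K_1 = K$, and for $n\geq 2$ let $(\tilde\tau_n,\tilde K_n)$ be the shifted impulses of $u^{\eps}$. One must check admissibility in the sense of the Definition of admissible impulse strategy — in particular that $\tilde K_1=K\in\mathcal A$ and is $\mathcal G_t$-measurable (true since $K\in\mathcal A(t,x)$ is deterministic given $(t,x)$), and that the strict-ordering condition $\tilde\tau_1<\tilde\tau_2<\cdots$ is inherited, possibly after a harmless relabelling or an appeal to the $\xi_t(\cdot)$ formulation in the Remark, which accommodates a mass at $t$. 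Third, observe that under $\tilde u$ the controlled process immediately after $t$ coincides in law with $X^{u^{\eps}}_{t,x+K}$, because the SDE~\eqref{EQN:EvEqT} has the same coefficients and the post-jump initial condition is $x+K$. Consequently the running-cost integral, the terminal term and the remaining impulse costs of $\tilde u$ reproduce exactly those of $u^{\eps}$ from $(t,x+K)$, while the only extra contribution is the cost $\rho_t(t)(K+\kappa)=K+\kappa$ of the first impulse (here $\rho_t(t)=1$, or in the pre-enlargement formulation~\eqref{EQN:GenCP} the discount factor at the starting time is $1$). Hence
\[
J^{\tilde u}(t,x) = J^{u^{\eps}}(t,x+K) - (K+\kappa) \geq V(t,x+K) - (K+\kappa) - \eps .
\]
Since $\tilde u\in\mathcal{U}[t,T]$, the definition of $V$ gives $V(t,x)\geq J^{\tilde u}(t,x)\geq V(t,x+K)-(K+\kappa)-\eps$. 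Taking the supremum over $K\in\mathcal{A}(t,x)$ and then letting $\eps\downarrow 0$ yields $V(t,x)\geq \mathcal{I}V(t,x)$, which is the assertion; the case $t=T$ follows the same way (or directly from Hypothesis~\ref{ASS:Lpz}(iv), the "no terminal impulse" condition, together with $V(T,\cdot)=g_1$).

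\textbf{Main obstacle.} The substantive point is not the comparison of payoffs — that is essentially bookkeeping once the concatenation is set up — but verifying that the concatenated control $\tilde u$ is genuinely admissible, namely that inserting an impulse exactly at the initial time $t$ is permitted and that the strict monotonicity $\tilde\tau_1<\tilde\tau_2$ and the $\mathcal{G}_{\tilde\tau_n}$-measurability of the $\tilde K_n$ survive the shift. One clean way around any degeneracy is to argue with impulses at $t+\delta$, pass payoffs through by continuity of $X^u_{t,x}$ in the starting time (Lemma~\ref{l:CSHT} provides the needed H\"older/Lipschitz estimates) and let $\delta\downarrow 0$; alternatively one invokes the $\xi_t(\cdot)$ description from the Remark, in which a jump located at $t$ is unproblematic. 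A secondary bit of care is that $\mathcal{A}(t,x)$ must be such that the supremum in $\mathcal{I}V$ is over admissible impulse sizes — but this is exactly how $\mathcal{I}$ was defined, so no extra work is needed there.
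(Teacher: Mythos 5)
Your proposal is correct and is essentially the paper's own argument: both hinge on prepending an immediate impulse of size $K$ at time $t$ to an $\eps$-optimal control for the problem started at $(t,x+K)$, using the identity $J^{\tilde u}(t,x)=J^{u^{\eps}}(t,x+K)-(K+\kappa)$ and the definition of $V$ as a supremum. The only difference is presentational — the paper wraps the same inequality in a proof by contradiction (assuming $V(t,x)<\mathcal{I}V(t,x)$ and deriving incompatible bounds), whereas you argue directly and take the supremum over $K$; your extra care about admissibility of a jump at the initial time, resolved via the $\xi_t(\cdot)$ formulation, matches the paper's construction $\hat\xi_t(\cdot)=\hat K+\xi_t(\cdot)$.
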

\begin{proof}
Reasoning by contradiction, we first suppose that there exists $(t,x)\in \mathcal{S}:= [0,T)\times[0,+\infty)$, such that
$$V(t,x) < \mathcal{I}V(t,x),$$
i.e.,
$$V(t,x) < \sup_{K\in\mathcal{A}}V(t,x+K)-(K+k)\,,$$
then there exists also $\epsilon>0$ and $\hat K\in \mathcal{A}$, such that
$$V(t,x) < V(t,x+\hat K) -(\hat K+k)-2\epsilon\;.$$
On the other hand, according to equation \eqref{EQN:val}, there exists $u\in\mathcal{U}[t,T]$ such that
\[
J^u(t,x+\hat K)> V(t,x+\hat K)-\epsilon\,.
\]
Defining now $\hat u = \hat{\xi}_t(\cdot) \doteq \hat{K} + \xi_t(\cdot)$, we have
$$V(t,x)\geq J^{\hat u}(t,x) = J^{u}(t,x+\hat K)-(\hat K+k)\;.$$
Combining all the estimates above, we have
$$V(t,x+\hat K)-(\hat K+k)-2\epsilon>V(t,x)>V(t,x+\hat K)-(\hat K+k)-\epsilon\;,$$
from which we have the desired contradiction.
\end{proof}

\begin{Remark}
\eqref{LEM:Ob} implies that we are considering $\mathcal{I}V(t,x)$ as a lower obstacle, which is given in implicit form, since it depends on the value function $V$ itself.
\end{Remark}

\begin{Theorem}\label{THM:E}
The value function $V(t,x)$ is a {\it viscosity solution} to the QVI \eqref{EQN:QVIuno} on $[0,T] \times \RR$, in the sense of \eqref{DEF:Viscosity}.
\end{Theorem}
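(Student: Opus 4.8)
The plan is to obtain the viscosity solution property of Definition~\ref{DEF:Viscosity} for the QVI~\eqref{EQN:QVIuno} from the dynamic programming principle (Theorem~\ref{THM:DPP}) together with the regularity already at our disposal, handing the obstacle inequality over to Lemma~\ref{LEM:Ob}. I would begin with the preliminary reductions. By Lemma~\ref{l:B} the value function lies in $\mathcal{PB}$, and by Lemma~\ref{l:CSHT} (together with the definition of $V$ at $t=T$) it is continuous on $[0,T]\times\RR$; hence $V_*=V^*=V$ and one may work with $V$ itself in Definition~\ref{DEF:Viscosity}. Since $V$ is continuous and, by Lemma~\ref{l:B}, bounded from above, the supremum in the definition of $\mathcal{I}V$ is in effect taken over a compact set of impulses (because $V(t,x+K)-K-\kappa\to-\infty$ as $K\to\infty$), so $\mathcal{I}V$ is continuous as well. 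The terminal line is then immediate: $V(T,\cdot)=g_1$ by definition, and $V(T,x)-\mathcal{I}V(T,x)=g_1(x)-\sup_{K}\bigl(g_1(x+K)-K-\kappa\bigr)\geq0$ by Hypothesis~\ref{ASS:Lpz}(iv), so the two conditions on $\{T\}\times\RR$ in Definition~\ref{DEF:Viscosity} hold with value $0$; and the obstacle inequality $V\geq\mathcal{I}V$ on $[0,T)\times\RR$ is exactly Lemma~\ref{LEM:Ob}. It thus remains to establish the two interior (parabolic) inequalities.

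For the \emph{supersolution} inequality I would fix $\phi\in C^{1,2}$ such that $V-\phi$ has a minimum equal to $0$ at $(\hat t,\hat x)\in[0,T)\times\RR$, apply Theorem~\ref{THM:DPP} with the no-impulse control over $[\hat t,\theta_h]$, where $\theta_h$ is $\hat t+h$ stopped at the exit of the uncontrolled trajectory from a fixed bounded box, use $\phi\leq V$ and $\phi(\hat t,\hat x)=V(\hat t,\hat x)$, and then the It\^o--D\"oblin formula for $s\mapsto\rho_{\hat t}(s)\phi\bigl(s,X^{u_0}_{\hat t,\hat x}(s)\bigr)$. The local martingale part has zero mean by the localisation, so dividing by $h$ and letting $h\downarrow0$ gives $-\partial_t\phi-\mathcal{L}\phi-f+\beta(\phi+g_2)\geq0$ at $(\hat t,\hat x)$, which together with Lemma~\ref{LEM:Ob} yields the supersolution requirement.

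For the \emph{subsolution} inequality I would argue by contradiction. Let $V-\phi$ have a strict global maximum equal to $0$ at $(\hat t,\hat x)\in[0,T)\times\RR$ (always reducible to this case by subtracting a parabola). If $V(\hat t,\hat x)=\mathcal{I}V(\hat t,\hat x)$ there is nothing to prove, so assume $V(\hat t,\hat x)>\mathcal{I}V(\hat t,\hat x)$ and, for contradiction, $-\partial_t\phi-\mathcal{L}\phi-f+\beta(\phi+g_2)>0$ at $(\hat t,\hat x)$. By continuity of $V$, $\mathcal{I}V$, $\phi$ and the coefficients there are constants $\delta,\gamma>0$ and a parabolic box $\bar B$ about $(\hat t,\hat x)$, contained in $[0,T)\times\RR$, such that on $\bar B$ one has $-\partial_s\phi-\mathcal{L}\phi-f+\beta(\phi+g_2)\geq\delta$ and $V\geq\mathcal{I}V+\delta$, while $V\leq\phi-\gamma$ on the part of $\partial\bar B$ that can be hit by a trajectory. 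For $\epsilon>0$ I would take an $\epsilon$-optimal control $u^{\epsilon}$ in Theorem~\ref{THM:DPP} and stop it at $\theta:=\tau^{\epsilon}_1\wedge\tau_{\bar B}$, where $\tau^{\epsilon}_1$ is its first impulse time and $\tau_{\bar B}$ the exit time from $\bar B$; on $[\hat t,\theta)$ the controlled trajectory is then a pure diffusion. In the resulting dynamic programming relation the term $\rho_{\hat t}(\theta)V(\theta,\cdot)$ is bounded by $\rho_{\hat t}(\theta)\bigl(\phi(\theta,\cdot)-\gamma\bigr)$ when the stop is a boundary exit, and, using the definition of $\mathcal{I}V$ together with the matching cost term $-\rho_{\hat t}(\theta)(K_1+\kappa)$, by $\rho_{\hat t}(\theta)\bigl(\phi(\theta,\cdot)-\delta\bigr)$ when the stop is an impulse. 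Expanding $\rho_{\hat t}(s)\phi\bigl(s,X^{u_0}_{\hat t,\hat x}(s)\bigr)$ by It\^o--D\"oblin on $[\hat t,\theta]$ and using $-\partial_s\phi-\mathcal{L}\phi-f+\beta(\phi+g_2)\geq\delta$ on $\bar B$, I expect to arrive at
\[
V(\hat t,\hat x)\;\leq\;\epsilon+V(\hat t,\hat x)-c_0,
\]
with $c_0>0$ depending only on $\min(\delta,\gamma)$ and on a positive lower bound for $\rho_{\hat t}(\theta)$; this is impossible for $\epsilon<c_0$, so in fact $-\partial_t\phi-\mathcal{L}\phi-f+\beta(\phi+g_2)\leq0$ at $(\hat t,\hat x)$.

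The main obstacle is the subsolution step, and within it the localisation that excludes a gain from impulses: it forces one to stop the controlled path at its first impulse time and to use the strict inequality $V>\mathcal{I}V$ near $(\hat t,\hat x)$ --- which is why the continuity of $\mathcal{I}V$ must be secured first --- and it requires checking that the impulse-cost term carried by the dynamic programming principle precisely absorbs the jump of $\phi$ along the controlled trajectory, so that the strict decrease produced by the diffusion part survives. Everything else --- the integrability used to discard the stochastic integrals, the limit $h\downarrow0$, and the verification of the terminal line --- is routine once the analysis is performed inside a bounded box.
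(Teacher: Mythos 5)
Your interior arguments track the paper's own proof closely: both the supersolution step (trivial control in the DPP, Dynkin/It\^o on $\rho\,\phi$, divide by the expected elapsed time and pass to the limit) and the subsolution step (an $\ve$-optimal control stopped at its first impulse time or at the exit from a small box, with Lemma~\ref{LEM:Ob} supplying the obstacle inequality and the strict inequality $V>\mathcal{I}V$ near the test point ruling out an immediate impulse) use the same mechanism as the paper. Your recasting of the subsolution step as a contradiction on a box where the operator is bounded below by $\delta$ is only a presentational variant of the paper's direct route of dividing by $\mathbb{E}[\hat\tau-t_0]$ and letting $r\to0$, $\ve\to0$.

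The one genuine gap is the terminal line, which you declare ``immediate'' from $V(T,\cdot)=g_1$. The paper explicitly flags this as the non-trivial part of the proof: since Lemma~\ref{l:CSHT} gives continuity only for $t_1,t_2\in[0,T)$, the value $V(T,x)$ entering Definition~\ref{DEF:Viscosity} through the envelopes $V_*,V^*$ is the limit $\lim_{(t,x')\to(T^-,x)}V(t,x')$, and one must prove that this limit satisfies $\min\{V(T,x)-g_1(x),\,V(T,x)-\mathcal{I}V(T,x)\}=0$ rather than read it off from the formula for $J^u$ at $t=T$. The paper does this by taking $\ve$-optimal controls $u_n$ along sequences $(t_n,x_n)\to(T,x)$, showing that on the set where $V(T,\cdot)>\mathcal{I}V(T,\cdot)$ the controlled trajectories stay near $x$ with probability tending to one, and concluding $V(T,x)\le g_1(x)$ by dominated convergence (the reverse inequality being easier). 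Without some such argument your proof does not establish the boundary part of the sub- and supersolution conditions: Hypothesis~\ref{ASS:Lpz}(iv) only handles the comparison between $g_1$ and $\mathcal{I}$ applied to $g_1$, not the identification of the left limit of $V$ with $g_1$.
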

\begin{proof}
\eqref{l:CSHT} implies that the value function is continuous. Therefore the lower--semicontinuous, resp. upper--semicontinuous, envelop of $V$ in \eqref{DEF:Viscosity}, does in fact coincide with $V$. Also, it is an immediate consequence of \eqref{l:B} that $V\in\mathcal{PB}.$

Let us prove that $V(t,x)$ is a viscosity sub-solution of \eqref{EQN:OBJ}. By \eqref{LEM:Ob},  we know that $V(t,x)\geq \mathcal{I}V(t,x)$, so that in what follows we only need to show that given $(t_0,x_0) \in [0,T) \times \RR$ such that
\begin{equation}\label{EQN:SUBassump}
V(t_0,x_0)> \mathcal{I}V(t_0,x_0)\;,
\end{equation}
then for every $\phi(t,x)\in \mathcal{C}^{1,2}([0,T]\times[0,+\infty))$ and every $t_0,x_0\in[0,+\infty)$, such that $\phi\geq V$ for all $(t,x)\in \mathcal{S}\cap B_r((t_0,x_0))$ and $V(t_0,x_0) = \phi(t_0,x_0)$, we want to show that
\begin{equation}\label{EQN:SUB}
-\frac{\partial}{\partial t} \phi(t_0,x_0) - \mathcal{L} \phi(t_0,x_0)-f(x_0)+ \beta(t_0) \big(
\phi(x_0)+g_2(x_0)\big)\leq 0\,.
\end{equation}
In fact, if $V(t_0,x_0)\leq \mathcal{I}V(t_0,x_0)$, then \eqref{EQN:SUB} immediately follows.

Choose $\epsilon>0$ and let $u = (\tau_1,\tau_2,...;K_1,K_2,...)\in \mathcal{U}[t_0,T]$ be a $\epsilon$- optimal control, i.e.,
$$V(t_0,x_0)< J^{u}(t_0,x_0) +\epsilon.$$
Since $\tau_1$ is a stopping time, $\{\omega,\tau_1(\omega) = t_0\}$ is $\mathcal{F}_{t_0}$- measurable, thus 
$$\tau_1(\omega) = t_0~ a.s.\quad\quad or\quad\quad \tau_1(\omega)>t_0~a.s.$$
If $\tau_1 = t_0$ a.s., $X^{u}_{t_0,x_0}$ takes a immediate jump from $x_0$ to the point $x_0+K_1$ and we have $J^{u}(t_0,x_0) = J^{u'}(t_0,x_0+K_1)-(K_1+k)$ where $u' = (\tau_2,\tau_3,...;K_2,K_3,...)\in\mathcal{U}[t_0,T]$. This implies that
$$V(t_0,x_0)\leq J^{u'}(t_0,x_0+K_1)-(K_1+k)+\epsilon< V(t_0,x_0+K_1)-(K_1+k)\leq \mathcal{I}V(t_0,x_0)+\epsilon,$$
which is a contradiction for $\epsilon< V(t_0,x_0)-\mathcal{I}V(t_0,x_0)$. Thus \eqref{EQN:SUBassump} implies that $\tau_1>t_0~a.s$ for all $\epsilon$- optimal controls such that $\epsilon< V(t_0,x_0)-\mathcal{I}V(t_0,x_0)$.
For any impulse control $u = (\tau_1,\tau_2,...;K_1,K_2,...)\in\mathcal{U}[t_0,T]$, define 
$$\hat\tau\doteq\tau_1\wedge (t_0+r)\wedge \inf\left\{t\Big|t>t_0,|x(t)-x_0|\geq r\right\}.$$

By the dynamic programming principle, for any $\epsilon>0$, there exists a control $u$ such that
\begin{equation}\label{EQN:SUB1}
	V(t_0,x_0)\leq E^{t_0,x_0}\left[\int_{t_0}^{\hat\tau}\Big\{\rho_{t_0}(s)\big(f(X(s))+\beta(s)g_2(x(s))\big)\Big\} ds+e^{{-\int_{t_0}^{\hat\tau} \beta(s)\,ds}}V(\hat\tau,X(\hat\tau)))\right]~+~\epsilon.
\end{equation}

By \eqref{EQN:SUB1} and the Dynkin formula we have that
\begin{equation}
\label{EQN:SUB2}
\begin{array}{l}
	V(t_0,x_0)~\leq~\ds \\[3mm]
	\leq E^{t_0,x_0}\left[\int_{t_0}^{\hat \tau}\Big\{\rho_{t_0}(s)\big(f(X(s))+\beta(s)g_2(X(s))\big)\Big\}\,ds+e^{-\int_{t_0}^{\hat\tau} \beta(s)\,ds}\phi(\hat\tau,X(\hat\tau)))\right]+\epsilon\\[3mm]
	=~\ds E^{t_0,x_0}\left[\int_{t_0}^{\hat \tau}\Bigg\{\rho_{t_0}(s)\Big(f(X(s))+\beta(s)\big(g_2(X(s))-\phi(s,X(s))\big)+ \Big)\Bigg\}\,ds\right]+\\[3mm]
	~\ds E^{t_0,x_0}\left[\int_{t_0}^{\hat \tau}\Bigg\{\rho_{t_0}(s)\Big(\frac{\partial}{\partial t}\phi(s,X(s))+\mathcal{L}\phi(s,X(s))\Big)\Bigg\}\,ds\right]+\ds \phi(t_0,x_0) +\epsilon.
\end{array}
\end{equation}

Using that $V(t_0,x_0) = \phi(t_0,x_0)$, we further obtain

\begin{align}
	E^{t_0,x_0}\bigg[ \int_{t_0}^{\hat \tau}\Bigg\{&\rho_{t_0}(s)\Big(f(X(s))+\beta(s)\big(g_2(X(s))-\phi(s,X(s))\big)+\label{EQN:SUB4}\\
	&+\frac{\partial}{\partial t}\phi(s,X(s))+\mathcal{L}\phi(s,X(s))\Big)\Bigg\}\,ds\bigg]+ \phi(t_0,x_0)\geq -\epsilon.\nonumber
\end{align}

Divide both sides of \eqref{EQN:SUB4} by $E(\hat\tau-t_0)$ and let $r\rightarrow 0$, we further get that
$$f(x_0)+\beta(t_0)\big(g(x_0)-\phi(t_0,x_0)\big)+\frac{\partial}{\partial t}\phi(t_0,x_0)+\mathcal{L}\phi(t_0,x_0)\geq-\epsilon.$$
Since $\epsilon>0$ is arbitrary, we finally get the desired inequality
\begin{equation}
	\label{EQN:SUB3}
	-f(x_0)+\beta(t_0)\big(\phi(t_0,x_0)+g_2(x_0)\big)-\frac{\partial}{\partial t}\phi(t_0,x_0)-\mathcal{L}\phi(t_0,x_0)\leq 0 \;,
\end{equation}
then $V(t,x)$ is a viscosity sub-solution.

To prove that $V(t,x)$ is also a viscosity super-solution of \eqref{EQN:OBJ}, let us consider $\phi\in \mathcal{C}^{1,2}(\mathcal{S})$, and any $(t_0,x_0)\in \mathcal{S}$ such that $\phi\leq V$ on $B_r(t_0,x_0)$ and $\phi(t_0,x_0) = V(t_0,x_0)$. Taking the  trivial control $u_0 = 0$ ( no interventions ), calling the corresponding trajectory $X(t) = X^{u_0}(t)$ with $x(t_0) = x_0$,  and defining $\hat \tau = (t_0+r)\wedge\inf\left\{t\Big|t>t_0,|x(t)-x_0|>r\right\}$, then, by the dynamic programming principle and the Dynkin formula, we have
$$
\begin{array}{l}
	V(t_0,x_0)~\geq~\ds E^{t_0,x_0}\left[\int_{t_0}^{\hat \tau}\Big\{\rho_{t_0}(s)\big(f(X(s))+\beta(s)g_2(X(s))\big)\Big\}\,ds+e^{-\int_{t_0}^{\hat\tau} \beta(s)\,ds}\phi(\hat\tau,X(\hat\tau)))\right]\\[3mm]
	=~\ds E^{t_0,x_0}\left[\int_{t_0}^{\hat \tau}\Bigg\{\rho_{t_0}(s)\Big(f(x(s))+\beta(s)\big(g_2(X(s))-\phi(s,X(s))\big)\Big)\Bigg\}\,ds\right]+\\[3mm]	
	~\ds E^{t_0,x_0}\left[\int_{t_0}^{\hat \tau}\Bigg\{\rho_{t_0}(s)\Big(\frac{\partial}{\partial t}\phi(s,X(s))+\mathcal{L}\phi(s,X(s))\Big)\Bigg\}\,ds\right]+\ds \phi(t_0,x_0) +\epsilon.
\end{array}
$$ 
Using $V(t_0,x_0) = \phi(t_0,x_0)$, we obtain that 

\begin{align}
	E^{t_0,x_0}\bigg[\int_{t_0}^{\hat \tau}\Bigg\{&\rho_{t_0}(s)\Big(f(X(s))+\beta(s)\big(g_2(X(s))-\phi(s,X(s))\big)+\label{EQN:SUP1}\\
	&\frac{\partial}{\partial t}\phi(s,X(s))+\mathcal{L}\phi(s,X(s))\Big)\Bigg\}\,ds\bigg]\leq-\epsilon.\nonumber
\end{align}

Divide both sides of \eqref{EQN:SUP1} by $E[\hat\tau-t_0]$ and let $r\rightarrow 0$, we obtain 
\begin{equation}
	\label{EQN:SUP2}
	-f(x_0)+\beta(t_0)\big(\phi(t_0,x_0)+g_2(x_0)\big)-\frac{\partial}{\partial t}\phi(t_0,x_0)-\mathcal{L}\phi(t_0,x_0)\geq 0\;.
\end{equation}
Since we have already proved that $V(t,x)\geq \mathcal{I}V(t,x)$, we finally conclude that 
\begin{align}
	\min\bigg [-&\frac{\partial}{\partial t} \phi(t_0,x_0) - \mathcal{L} \phi(t_0,x_0)-f(x_0)+ \beta(t_0) \big(
\phi(x_0)+g_2(x_0)\big)\,,\label{EQN:SUP}\\
\,&, V(t_0,x_0) - \mathcal{I} V(t_0,x_0)\bigg ]\geq 0\;.
\end{align}
Combining \eqref{EQN:SUB} and \eqref{EQN:SUP}, we have that $v(t,x)$ is a viscosity solution of \eqref{EQN:OBJ}.
it is worth to mention that the terminal condition is non trivial. In fact, it has to take into account that just right before the horizon time  $T$, the controller might act by an impulse control. To this extent we have to specify that the terminal condition in equation \eqref{EQN:QVIuno} is to be intended as
\[
V(T,x) := \lim_{(t,x') \to (T^-,x))}V(t,x')\,.
\]
Since \eqref{LEM:Ob} implies that $V(t,x)\geq \I V(t,x)$ for all $(t,x)\in[0,T)\times\R$, in the limit one has $V(T,x)\geq \I V(T,x)$ for all $x\in\R$. To show the boundary condition
\begin{equation}\label{EQN:bnd}
\min\big\{V(T,x)-g_1(x),V(T,x)-\I V(T,x)\big\} = 0,	
\end{equation}
one first consider all the $x\in\R$ such that $V(T,x)>\I V(T,x)$. For any sequence $(t_n,x_n)\rightarrow (T,x)$ with $(t_n,x_n)\in [0,T)\times\R$, by continuity one has $V(t_n,x_n)> \I V(t_n,x_n)$ for all $n$ large enough. Then for each $\ve>0$ sufficiently small, consider controls $u_n\in\mathcal{U}[t_n,T]$ such that
$$ V(t_n,x_n)\leq J^{u_n}(t_n,x_n)+\ve.$$
It then suffices to show that
\begin{align}
\mathbb{E} \left [\int_{t_n}^{T} \rho_{t_n}(s) \Big(f(X^{u_n}_{t_n,x_n}(s))-\beta(s)g_2(X^{u_n}_{t_n,x_n}(s))\Big)ds\right ] +\label{EQN:bndlimit}\\
+\mathbb{E} \left [\rho_{t_n}(T)g_1(X^{u_n}_{t_n,x_n}(T))- \sum_{t_n \leq \tau_j \leq T} \rho_{t_n}(\tau_j)\left ( K_j+\kappa\right )\right ]\rightarrow g_1(x)\nonumber
\end{align}
as $n\rightarrow +\infty$. Notice that since $V(t,x')> \I V(t,x')$ for all $(t,x')$ in a neighborhood of $(T,x)$, for all $\delta>0$ small enough one has
$$\mathbb{P}\Big(\sup_{s\in [t_n,T]}|X^{u_n}_{t_n,x_n}(s)-x_n|<\delta\Big)\rightarrow 1\quad\hbox{as}~n\rightarrow \infty.$$
Suppose that there exists $F\in L^1(\mathbb{P};\R)$ such that

\begin{align}
&\int_{t_n}^{T} \rho_{t_n}(s) \Big(|f(X^{u_n}_{t_n,x_n}(s))|+\beta(s)|g_2(X^{u_n}_{t_n,x_n}(s))|\Big)ds +\nonumber\\
&\rho_{t_n}(T)|g_1(X^{u_n}_{t_n,x_n}(T))|+ |\sum_{t_n \leq \tau_j \leq T} \rho_{t_n}(\tau_j)\left ( K_j+\kappa\right )|\leq F\nonumber
\end{align}

for all $n$ large enough, an application of dominant convergence theorem proves \eqref{EQN:bndlimit}. So we conclude that for any $(T,x)$ such that $V(T,x)>\I V(T,x)$, one has 
$$V(T,x)\leq g_1(x)+\ve,~\forall \ve>0\quad\implies \quad V(T,x)\leq g_1(x).$$
By a similar approach, one can show that $V(T,x)\geq g_1(x)$ for all $x\in\R$. This completes the proof of \eqref{EQN:bnd}.
\end{proof}

We are now to show that the value function is the unique viscosity solution to equation \eqref{EQN:QVIuno} based on a comparison principle. In order to do that let us introduce a different definition of viscosity solution, see, e.g. \cite{Ish}, based on the notion of jets.

\begin{Definition}\label{DEF:Vis}
Let $V:[0,T] \in\mathcal{PB}$ a upper--semicontinuous function, then we define

\begin{align}
\mathcal{P}^{2,+} V(s,x) &= \Big \{ \left (p,q,M\right ) \in \RR \times \RR \times \RR \, :\nonumber \,\\
V(s,y) & \leq V(t,x) + p(t-s) + q (x-y) + \frac{1}{2} M (x-y)^2 + o(|t-s| + |x-y|^2)\Big \}\nonumber\,\\
\bar{\mathcal{P}}^{2,+} V(s,x) &= \left \{ \left (p,q,M\right ) \in \RR \times \RR \times \RR \, : \,\right . \exists (t_n,x_n) \in [0,T]\times \RR\,:  \left (p_n,q_n,M_n\right ) \in \mathcal{P}^{2,+} V(t_n,x_n)\,\nonumber\\
 & \left . \, , \,\left (t_n,x_n,V(t_n,x_n),p_n,q_n,M_n\right ) \to \left (t,x,V(t,x),p,q,M\right )  \right \}\nonumber\,.
\end{align}

For  lower--semicontinuous function $V$, we define
\[
\mathcal{P}^{2,-} V(s,x):= -\mathcal{P}^{2,+} -V(s,x)  \,,\quad \bar{\mathcal{P}}^{2,-} V(s,x):= -\bar{\mathcal{P}}^{2,+} -V(s,x) \,.
\]
\end{Definition}

We can therefore state the equivalence between the two notion of viscosity solution stated before.

\begin{Proposition}
A function $V\in\mathcal{PB}$ is a viscosity sub, resp. super, solution to equation \eqref{EQN:QVIuno} if and only if $\forall$ $(p,q,M) \in \bar{\mathcal{P}}^{2,+} V(s,x)$, resp. $\bar{\mathcal{P}}^{2,-} V(s,x)$,
\begin{align}
\min\bigg [&-p - \mu(t,x) q -\frac{1}{2} \sigma^2(t,x)M- f(x)+\beta(t)(V(t,x)+g_2(x))\,,\\
&,\, V(t,x) - \mathcal{I} V(t,x)\bigg ]\leq 0\, ( \, \geq 0)\,.
\end{align}
\end{Proposition}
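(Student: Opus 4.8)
The plan is to prove the two implications separately, reducing the equivalence to the classical correspondence between second-order parabolic semijets and $C^{1,2}$ test functions, together with stability of the inequality under the closure operation $\mathcal{P}^{2,+}\mapsto\bar{\mathcal{P}}^{2,+}$. I would carry out everything for the subsolution statement; the supersolution statement then follows mechanically from the sign conventions $\mathcal{P}^{2,-}V=-\mathcal{P}^{2,+}(-V)$ and $\bar{\mathcal{P}}^{2,-}V=-\bar{\mathcal{P}}^{2,+}(-V)$ recorded in Definition \ref{DEF:Vis}. A preliminary observation I would isolate first: since $V$ is continuous by Lemma \ref{l:CSHT} (so $V_*=V^*=V$) and $\mu(t,\cdot),\sigma(t,\cdot),f,g_2,\beta$ are continuous by Hypothesis \ref{ASS:Lpz}, while the Lipschitz bound on $V$ transfers to a continuity modulus for $(t,x)\mapsto\mathcal{I}V(t,x)$, the map
\[
(t,x,v,p,q,M)\longmapsto\min\Big[-p-\mu(t,x)q-\tfrac{1}{2}\sigma^2(t,x)M-f(x)+\beta(t)(v+g_2(x)),\; v-\mathcal{I}V(t,x)\Big]
\]
is jointly continuous. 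This is what lets me pass from $\mathcal{P}^{2,+}$ to $\bar{\mathcal{P}}^{2,+}$.

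For the ``only if'' direction I would start from $V$ a viscosity subsolution in the sense of Definition \ref{DEF:Viscosity}. Given $(p,q,M)\in\mathcal{P}^{2,+}V(\hat t,\hat x)$, I invoke the standard smoothing lemma (see, e.g., \cite{Ish}): there is $\phi\in C^{1,2}([0,T]\times\RR)$ with $V-\phi$ having a local maximum equal to $0$ at $(\hat t,\hat x)$ and $(\partial_t\phi,\partial_x\phi,\partial_{xx}\phi)(\hat t,\hat x)=(p,q,M)$; the subsolution inequality for $\phi$ is then literally the claimed inequality for the jet $(p,q,M)$. For a general $(p,q,M)\in\bar{\mathcal{P}}^{2,+}V(\hat t,\hat x)$ I approximate by $(p_n,q_n,M_n)\in\mathcal{P}^{2,+}V(t_n,x_n)$ with full convergence of $(t_n,x_n,V(t_n,x_n),p_n,q_n,M_n)$ and pass to the limit using the joint continuity above.

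For the ``if'' direction I would take any admissible $\phi\in C^{1,2}([0,T]\times\RR)$ touching $V$ from above at $(\hat t,\hat x)$ with contact value $0$; a second-order Taylor expansion of $\phi$ around $(\hat t,\hat x)$ shows at once that $(\partial_t\phi,\partial_x\phi,\partial_{xx}\phi)(\hat t,\hat x)\in\mathcal{P}^{2,+}V(\hat t,\hat x)\subseteq\bar{\mathcal{P}}^{2,+}V(\hat t,\hat x)$, so applying the hypothesis to this particular jet returns exactly the subsolution inequality of Definition \ref{DEF:Viscosity} for $\phi$; since $\phi$ is arbitrary, $V$ is a viscosity subsolution. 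The terminal layer $\{T\}\times\RR$ is handled in the same way with the one-sided-in-time jet.

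I expect the only genuinely non-routine point to be the jet-to-test-function smoothing lemma invoked in the ``only if'' direction: it is classical, but it is precisely where the remainder $o(|t-\hat t|+|x-\hat x|^2)$ appearing in the definition of $\mathcal{P}^{2,+}$ must be absorbed into a $C^{1,2}$ function with prescribed first and second derivatives. The second point requiring care is verifying that the nonlocal obstacle term $\mathcal{I}V$ is continuous, so that the passage to $\bar{\mathcal{P}}^{2,+}$ (and, in the subsolution argument, the continuity of the whole functional in the variable $v$) is legitimate; both of these rest on the regularity of $V$ established in Lemma \ref{l:CSHT}.
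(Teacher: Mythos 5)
Your argument is correct and is precisely the classical Crandall--Ishii--Lions equivalence between the test-function and semijet formulations (jet-to-test-function smoothing for one direction, Taylor expansion for the other, and stability under closure of the jets via joint continuity of the Hamiltonian and of $(t,x)\mapsto\mathcal{I}V(t,x)$, the latter inherited from the uniform continuity of $V$ in Lemma \ref{l:CSHT}); the paper states the Proposition without proof, implicitly relying on exactly this standard argument via the reference to \cite{Ish}. No gaps.
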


\begin{Theorem}[Comparison principle]\label{THM:U}
Suppose that \eqref{ASS:Lpz} is satisfied and that $U$ and $V$ are, repectively, a viscosity super solution and viscority sub solution to the equation \eqref{EQN:QVIuno}. Assume also that $U$ and $V$ are uniformly continuous, then $V \leq U$ on $[0,T]\times \RR$.
\end{Theorem}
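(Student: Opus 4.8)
The plan is to prove the comparison principle by the standard doubling-of-variables technique adapted to impulse-control QVIs, combined with a penalization both in time (to exploit the terminal condition and avoid issues near $t=T$) and in space (to handle the unbounded domain $\RR$). Suppose for contradiction that $M \doteq \sup_{[0,T]\times\RR}(V - U) > 0$. Since $U,V \in \mathcal{PB}$ are uniformly continuous, a first reduction is to introduce, for small $\lambda,\mu>0$, the perturbed functions $\tilde U(t,x) \doteq U(t,x) + \mu(T-t) + \lambda(1+|x|^2)^{1/2}$ (or a similar coercive modification) so that the supremum of $V - \tilde U$ is attained at an interior point and strictly positive for $\lambda,\mu$ small; the $\mu(T-t)$ term is what will eventually produce a strict inequality in the PDE part. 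Then for $\alpha > 0$ consider
\[
\Phi_\alpha(t,x,y) \doteq V(t,x) - \tilde U(t,y) - \frac{\alpha}{2}|x-y|^2,
\]
which, by the coercivity built into $\tilde U$ and polynomial boundedness, attains a maximum at some point $(t_\alpha, x_\alpha, y_\alpha)$. Standard arguments (Crandall--Ishii, see \cite{Ish}) give $\alpha|x_\alpha - y_\alpha|^2 \to 0$, $x_\alpha, y_\alpha \to \hat x$, and $V(t_\alpha,x_\alpha) - \tilde U(t_\alpha,y_\alpha) \to \sup(V-\tilde U) > 0$ along a subsequence.

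The core dichotomy is at the maximizing point $(t_\alpha, x_\alpha)$: either we are in the ``impulse branch'' or the ``PDE branch'' of the subsolution inequality for $V$. First I would rule out the impulse branch. If $V(t_\alpha, x_\alpha) - \mathcal{I}V(t_\alpha, x_\alpha) \le 0$, i.e.\ $V(t_\alpha,x_\alpha) \le V(t_\alpha, x_\alpha + K) - (K+\kappa)$ for some near-optimal $K \in \mathcal{A}$, then using the supersolution property $U(t_\alpha,y_\alpha) \ge \mathcal{I}U(t_\alpha,y_\alpha) \ge U(t_\alpha, y_\alpha + K) - (K+\kappa)$ and the fact that $\tilde U$ also satisfies (thanks to the convexity/monotonicity of the added terms, or by a direct estimate) an analogous obstacle inequality up to a controlled error, one subtracts to get $V(t_\alpha,x_\alpha) - \tilde U(t_\alpha,y_\alpha) \le V(t_\alpha, x_\alpha+K) - \tilde U(t_\alpha, y_\alpha+K) + o(1)$; but the left side converges to $M>0$ while translating both points by the same $K$ keeps them a valid competitor in $\Phi_\alpha$, so $V(t_\alpha,x_\alpha+K) - \tilde U(t_\alpha,y_\alpha+K) - \frac{\alpha}{2}|x_\alpha-y_\alpha|^2 \le \Phi_\alpha(t_\alpha,x_\alpha,y_\alpha)$, forcing the increment to be $\le o(1)$ — this is the usual ``no loss of mass under impulse'' contradiction, which must be set up carefully so that the error from the $\lambda,\mu$ terms is genuinely negligible. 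One should also check separately that $t_\alpha < T$ for $\alpha$ large (otherwise use the terminal condition $V(T,\cdot)\le g_1 \le U(T,\cdot)$ together with the boundary obstacle part of Definition \ref{DEF:Viscosity}), so that the interior viscosity inequalities apply.

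In the remaining ``PDE branch'' I would apply the Crandall--Ishii lemma to the function $\Phi_\alpha$ to obtain $(p_1, q, M_1) \in \bar{\mathcal{P}}^{2,+}V(t_\alpha,x_\alpha)$ and $(p_2, q, M_2) \in \bar{\mathcal{P}}^{2,-}\tilde U(t_\alpha,y_\alpha)$ with $q = \alpha(x_\alpha - y_\alpha)$, $p_1 - p_2 = 0$ (time is not doubled) — actually $p_1 - p_2 = -\mu$ after accounting for the $\mu(T-t)$ term, this is the crucial strict gain — and the matrix inequality $M_1 \le M_2 + $ (error) with both bounded by $3\alpha\begin{psmallmatrix}1&-1\\-1&1\end{psmallmatrix}$-type bounds. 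Writing the subsolution inequality for $V$ and supersolution inequality for $U$ in jet form (via the Proposition above) and subtracting, the second-order terms combine as $\frac12\sigma^2(t_\alpha,x_\alpha)M_1 - \frac12\sigma^2(t_\alpha,y_\alpha)M_2 \le C\alpha|x_\alpha-y_\alpha|^2 \to 0$ using the Lipschitz/linear-growth bound on $\sigma$ from Hypothesis \ref{ASS:Lpz}; the first-order terms give $|\mu(t,x_\alpha)-\mu(t,y_\alpha)|\,|q| \le C\alpha|x_\alpha - y_\alpha|^2 \to 0$ by the Lipschitz property of $\lambda$ and $\tilde\mu$; the zeroth-order terms $-f(x_\alpha)+f(y_\alpha)$ and $\beta(t_\alpha)(V(t_\alpha,x_\alpha) - U(t_\alpha,y_\alpha) + g_2(x_\alpha) - g_2(y_\alpha))$ are handled by Lipschitz continuity of $f,g_2$ plus the sign of $\beta$ (here one needs $\beta \ge 0$, which should be remarked as implicit in the Cox-process assumption, or absorbed by choosing $\mu$ large relative to $\|\beta\|_\infty M$); what survives is essentially $\mu \le C\alpha|x_\alpha - y_\alpha|^2 + (\text{Lipschitz errors}) + \lambda C$, and letting $\alpha \to \infty$ then $\lambda,\mu \to 0$ in the right order yields $\mu \le 0$, a contradiction. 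The main obstacle, and the place where real care is needed, is the interplay between the unboundedness of the domain (the $\lambda(1+|x|^2)^{1/2}$ perturbation must be chosen so that $\mathcal{L}$ applied to it, namely the $\mu(t,x)q$ and $\sigma^2(t,x)M$ terms, produces errors that are $O(\lambda)$ uniformly and do not blow up as $|x|\to\infty$, which works precisely because $\mu$ and $\sigma$ have at most linear growth) together with verifying the obstacle-translation step for $\tilde U$ rather than $U$; I would single this out as the technical heart of the argument.
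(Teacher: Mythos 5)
Your proposal follows the same overall strategy as the paper (argue by contradiction, double the space variable, invoke the Ishii--Crandall lemma on the jets, and treat $t=T$ separately via the terminal obstacle condition), but the three key mechanisms are implemented differently, and in one place more carefully than the paper does. First, to generate the strict inequality that produces the final contradiction, the paper multiplies by $e^{rt}$, which turns the QVI into one with a zeroth-order term $ru$ and yields $r(\tilde V-\tilde U)(t_0,x_0)\leq 0$ at the end; you instead add $\mu(T-t)$ to the supersolution, which yields the gap $p_1-p_2=-\mu$ in the time components of the jets. These are interchangeable. Second, for localization the paper penalizes with $n|x-y|^2+|x-x_0|^4+|t-t_0|^2$ around a presumed maximum point inside a compact box $I_\delta\times B_\delta$, whereas you add a coercive term $\lambda(1+|x|^2)^{1/2}$ to force attainment of the supremum on all of $[0,T]\times\RR$. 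Third, and most substantively, you handle the obstacle branch explicitly: if $V\leq \mathcal{I}V$ at the doubled maximum, you translate both arguments by the (near-)optimal impulse $K$, use $U\geq\mathcal{I}U$, and contradict the maximality of $\Phi_\alpha$. The paper instead asserts that near the maximum point one has $\tilde V>\tilde{\mathcal{I}}\tilde V$ (so that only the PDE branch is active) ``analogously to what we have derived above,'' with essentially no justification; your translation argument is the standard and more robust way to dispose of this branch, and is arguably the step that distinguishes QVI comparison from the purely second-order case. One caveat on your side: the claim that $\mathcal{L}$ applied to $\lambda(1+|x|^2)^{1/2}$ contributes an error that is $O(\lambda)$ uniformly is not quite right as stated, since $\mu(t,x)\partial_x\phi(x)$ grows linearly in $x$ even though $\partial_x\phi$ is bounded; the standard repair is to use a penalization of the form $\lambda e^{\rho(T-t)}\phi(x)$ with $\rho$ chosen so that $-\partial_t-\mathcal{L}$ applied to it has a favorable sign (exploiting $\mathcal{L}\phi\leq C\phi$ under the linear-growth bounds of Hypothesis \ref{ASS:Lpz}), or to bound $|x_\alpha|$ first and keep track of the $\lambda$-dependence. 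You correctly single this out as the technical heart, but it should be resolved in that way rather than left as an $O(\lambda)$ assertion; similarly, in the impulse branch the error $\lambda K$ from the coercive term must be controlled by first bounding the near-optimal $K$ via the upper bound on $V$.
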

\begin{proof}
Let us prove the result by contradiction, assuming that
\[
\sup_{[0,T] \times \RR} (V-U) = \eta >0\,.
\]

For $r >0$ let us define
\[
\tilde{V}(t,x) := e^{rt}V(x,t)\,,\quad \tilde{U}(t,x) := e^{rt}U(t,x)\,.
\]

From the theorem hypotheses, that is $U$ and $V$ are viscosity super and sub solution to equation \eqref{EQN:QVIuno}, we immediately have that $\tilde{V}$ and $\tilde{U}$ are viscosity super and sub solution to
\begin{align}\label{EQN:QVI}
\begin{cases}
\min\bigg [ &r u(t,x) -\frac{\partial}{\partial t} u(t,x) - \mathcal{L} u(t,x)- e^{rt}f(x)+e^{rt}\beta(t)(V(t,x)+g_2(x))\,,\nonumber \\
&,\, u(t,x) - \tilde{\mathcal{I}} u(t,x)\bigg ]=0\,,~~ \mbox{ on }\, [0,T) \times \RR \nonumber\\
u(T,x)&=e^{rt}g_1(x)\,,\quad \mbox{ on }\, \{T\} \times \RR \, ,
\end{cases}\;,
\end{align}
with $\tilde{\mathcal{I}}$ being the non--local impulse operator defined as
\[
\tilde{\mathcal{I}} u(t,x) := \sup_{K \in \mathcal{A}(t,x)} \left [u(t,x+K) - e^{rt}(K+\kappa))\right ]\,.
\]

Let us then assume that for $x_0 \in \RR$ we have that
\[
\tilde{V}(T,x_0) -\tilde{U}(T,x_0)>0\,,
\]
and from the fact that $\tilde{U}$ is a viscosity super solution, resp. $\tilde{V}$ is a viscosity sub solution, we have that it exists $\bar{x}$ such that
\[
\tilde{U}(T,\bar{x}) < \tilde{\mathcal{I}}\tilde{U}(T,\bar{x}) \,,\quad \mbox{resp.} \, \tilde{V}(T,\bar{x}) > \tilde{\mathcal{I}}\tilde{V}(T,\bar{x})\,.
\]

Since we also have that $\tilde{V}(T,\bar{x}) \leq e^{rt}g_1(\bar{x})$ and $\tilde{U}(T,\bar{x}) \geq e^{rt}g_1(\bar{x})$, we conclude that
\[
\tilde{V}(T,\bar{x}) -\tilde{U}(T,\bar{x})\leq0\,,
\]
which contradict the assumptions.

Then suppose that there exists $(\bar{t},\bar{x})\in [0,T)\times \RR$, such that
\[
\tilde{V}(\bar{t},\bar{x}) -\tilde{U}(\bar{t},\bar{x})>0\,,
\]
then, analogously to what we have derived above, we have that
\[
\tilde{U}(t,x) < \tilde{\mathcal{I}}\tilde{U}(t,x) \,,\quad \mbox{resp.} \,\quad \tilde{V}(t,x) > \tilde{\mathcal{I}}\tilde{V}(t,x)\,,
\]
for $t \in I_\delta:=[\bar{t}-\delta,\bar{t}+\delta]$ and $x \in B_\delta:=[\bar{x}-\delta,\bar{x}+\delta]$. 

Therefore taking $(t_0,x_0)\in I_\delta \times B_\delta$, such that
\[
\sup_{I_\delta \times B_\delta} \tilde{V}-\tilde{U}=(\tilde{V}-\tilde{U})(t_0,x_0)>0\,,
\]
and considering
\[
\varphi_n(t,x,y):= \tilde{V}(t,x)-\tilde{U}(t,x) -\varrho_n (t,x,y)\,,
\]
with
\[
\varrho_n (t,x,y)= n|x-y|^2 + |x-x_0|^4 + |t-t_0|^2 \;,
\]
for any $n \in \mathbb{N}$ there exist a point $(t_n,x_n,y_n)$ attaining the maximum of $\varphi$, so that, up to a subsequence, we have
\begin{equation}\label{EQN:ConvUV}
\tilde{V}(t_n,x_n)-\tilde{U}(t_n,x_n) \to \tilde{V}(t_0,x_0)-\tilde{U}(t_0,x_0)\,,\quad \mbox{as}\, n \to \infty\,.
\end{equation}

Moreover, since
\[
\tilde{V}(t_0,x_0)-\tilde{U}(t_0,x_0) = \varphi_n (t_0,x_0,x_0) \leq \varphi_n (t_n,x_n,x_n)\,,
\]
then

\begin{align}
\tilde{V}(t_0,x_0)-\tilde{U}(t_0,x_0) &\leq \liminf_{n \to \infty} \varphi_n (t_0,x_0,y_0)\leq  \limsup_{n \to \infty} \varphi_n (t_0,x_0,y_0)\leq \nonumber\\
&\leq \tilde{V}(\bar{t},\bar{x})-\tilde{U}(\bar{t},\bar{x}) -\liminf_{n \to \infty} n|x-y|^2 + |x-x_0|^4 + |t-t_0|^2\,.\nonumber
\end{align}

Therefore, using the optimality of $(x_0,t_0)$, we obtain that, considering up to a subsequence it holds $(t_n,x_n,y_n)\to (t_0,x_0,x_0)$ and $n|x_n-y_n|\to 0$.

Applying the Ishii lemma, we have that there exists $(p^n_V,q^n_V,M^n_V) \in \bar{\mathcal{P}}^{2,+} \tilde{V}(t_n,x_n)$ and $(p^n_U,q^n_U,M^n_U) \in \bar{\mathcal{P}}^{2,-} \tilde{U}(t_n,x_n)$, such that

\begin{align}
&p^n_V - p^n_V = 2(t_n t_0)\,,\nonumber\\
&q^n_V = \partial_x \varrho_n\,,\quad q^n_U = - \partial_y \varrho_n\,,\nonumber
\end{align}

and
\[
\begin{pmatrix}
M_n & 0 \\
0 & -N_m
\end{pmatrix}
\leq A_n + \frac{1}{2n}A^n_n\,,
\]
with $A_n = \partial_{xy} \varrho_n$. Therefore from the viscosity sub-solution property of $\tilde{V}$, resp. the viscosity super-solution property of $\tilde{U}$,  by the Lipschitz continuity of $\mu$ and $\sigma$ in $x$ and \eqref{DEF:Vis} 
we have that
\[
r\left (\tilde{V}(t_0,x_0) - \tilde{U}(t_0,x_0)\right ) \leq 0\,,
\]
which gives the desired contradiction.
\end{proof}

We are now able to state the uniqueness result for the viscosity solution.

\begin{Corollary}
Let \eqref{ASS:Lpz} hold true, then there exists a unique viscosity solution to equation \eqref{EQN:QVIuno}.
\end{Corollary}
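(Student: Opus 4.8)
The plan is to obtain the result by pairing the existence statement already in hand with the comparison principle. For existence I would simply invoke Theorem~\eqref{THM:E}: the value function $V$ of problem \eqref{EQN:val} is a viscosity solution of the QVI \eqref{EQN:QVIuno} in the sense of Definition~\eqref{DEF:Viscosity}. I would then record, using Lemma~\eqref{l:B}, that $V$ satisfies the two--sided bound $-C_0(1+|x|)\le V(t,x)\le C_1$, so that $V\in\mathcal{PB}$, and, using Lemma~\eqref{l:CSHT}, that $V$ is Lipschitz in $x$ and $\tfrac12$--H\"older in $t$; in particular $V$ is uniformly continuous, which is exactly the regularity needed to feed it into the comparison Theorem~\eqref{THM:U}.

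For uniqueness I would argue as follows. Let $V_1$ and $V_2$ be two viscosity solutions of \eqref{EQN:QVIuno} lying in the class to which the comparison principle applies. Since a viscosity solution is simultaneously a subsolution and a supersolution, I would apply Theorem~\eqref{THM:U} once with $(U,V)=(V_2,V_1)$ to get $V_1\le V_2$ on $[0,T]\times\RR$, and once with $(U,V)=(V_1,V_2)$ to get the reverse inequality $V_2\le V_1$; combining the two gives $V_1\equiv V_2$. Hence the viscosity solution is unique and coincides with the value function $V$.

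The delicate point is matching the hypotheses: Theorem~\eqref{THM:U} requires both competitors to be uniformly continuous, whereas a generic element of $\mathcal{PB}$ need not be. I would therefore phrase the uniqueness within the class of uniformly continuous functions (equivalently, in view of the a priori bounds, functions that are Lipschitz in space, $\tfrac12$--H\"older in time and polynomially bounded), and note that this costs nothing for the problem at hand, since the value function itself belongs to this class by Lemmas~\eqref{l:B}--\eqref{l:CSHT}. A secondary check I would carry out is that the two--sided terminal condition of Definition~\eqref{DEF:Viscosity} is genuinely covered by the $t=T$ part of the comparison argument, i.e. that the step deriving $\tilde V(T,\cdot)\le\tilde U(T,\cdot)$ from $\tilde V(T,\bar x)>\tilde{\mathcal{I}}\tilde V(T,\bar x)$, $\tilde U(T,\bar x)<\tilde{\mathcal{I}}\tilde U(T,\bar x)$ together with $\tilde V(T,\cdot)\le e^{rT}g_1$ and $\tilde U(T,\cdot)\ge e^{rT}g_1$ really goes through; but this is already contained in the proof of Theorem~\eqref{THM:U}, so the corollary reduces to a two--line application of the two results above.
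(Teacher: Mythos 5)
Your argument is correct and is essentially the paper's own proof: existence comes from Theorem \ref{THM:E}, and uniqueness follows by applying the comparison principle (Theorem \ref{THM:U}) in both directions, using that each solution is simultaneously a sub- and a supersolution. Your added remark that uniqueness only holds within the class of uniformly continuous (polynomially bounded) functions, to which the value function belongs by Lemmas \ref{l:B} and \ref{l:CSHT}, is a precision the paper leaves implicit but does not change the route.
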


\begin{proof}
Let $V_1$ and $V_2$ two viscosity solution to equation \eqref{EQN:QVIuno}; then since $V_1$ is a subsolution and $V_2$ is a supersolution, by comparison principle \eqref{THM:U} we obtain that $V_2 \leq V_1$. Since it must also holds the opposite we obtain the claim.
\end{proof}

\section{Smooth fit principle on the value function}\label{SEC:SF}

Under further regularity assumptions on the coefficients, to be further specified in a while, one can prove the regularity property of the value function, with particular reference to the smooth-fit property through the switching boundaries between action and continuation regions. This results, known as \textit{smooth--fit principle}, see, e.g, \cite{ HHY, EgamiYamazaki, GuoWu}, has already been proven to hold in the infinite horizon case. Also, we will prove $W^{(1,2),p}_{loc}$ regularity for the value function $V(t,x)$ on any fixed parabolic domain
$Q_T\doteq (\delta,T]\times B_R(0)$ for any constants $0<\delta<T,~R>0$. 

In what follows we introduce the definition of the function spaces we are going to use 
throughout the section, $\Omega$ being a bounded open set:
\begin{equation}
	\begin{array}{l}
	W^{(0,1),p}(\Omega) = \{u\in L^p(\Omega): u_{x_i}\in L^p(\Omega)\},\\[3mm]
	W^{(1,2),p}(\Omega) = \{u\in W^{(0,1),p}(\Omega): u_{x_ix_j}\in L^p(\Omega)\},\\[3mm]
	C^{0+\frac{\alpha}{2},0+\alpha}(\bar\Omega) = \left\{u\in C(\bar\Omega):\sup_{(x,t),(y,s)\in\Omega,(x,t)\neq(y,s)}\frac{|u(t,x)-u(s,y)|}{(|t-s|+|x-y|^2)^{\alpha/2}}<+\infty\right\},\\[3mm]
	C^{1+\frac{\alpha}{2},2+\alpha}(\bar\Omega) = \left\{u\in C(\bar\Omega):u_t,u_{x_ix_j}\in C^{0+\frac{\alpha}{2},0+\alpha}\right\},\\[3mm]
	W^{(1,2),p}_{loc}(\Omega) = \left\{u\in L^p_{loc}(\Omega):u\in W^{(1,2),p}(U)~\forall \hbox{ open } U \hbox{ with }\bar U\subset\bar\Omega\backslash\partial_P\Omega\right\}.
	\end{array}
\end{equation}
The above notations are similar to the notations used in \cite{GuoChen}.

Recall that $\beta(t)$ is the hazard rate function defined in \eqref{EQN:CoxP} and we make the following assumption:
\begin{Hypothesis}\label{ASS:smoothfit}
	Let $\alpha\in(0,1]$, we assume that the intensity function $\beta(t)\in C^{\alpha/2}([0,T])$ and $\sigma(s,x)\in \C^{0+\alpha,0+\frac{\alpha}{2}}(\bar Q_T)$ satisfying the uniform elliptic condition, i.e.
	$$\sigma(s,x)\geq \delta_1>0$$
 for some constant $\delta_1>0$ depending on the domain $Q_T$.
\end{Hypothesis}
Before proceeding to the smooth fit principle, recall that we divide the region $[0,T]\times \mathbb{R}$ into the following regions:
$$
\begin{array}{l}
	\mathcal{C} \doteq \left\{(t,x) : V(t,x)>\mathcal{I}V(t,x)\right\},\\[3mm]
	\mathcal{A} \doteq \left\{(t,x) : V(t,x)=\mathcal{I}V(t,x)\right\}
\end{array}
$$
and for any open set $\Omega\in \R^2$, the parabolic boundary $\partial_P\Omega$ is defined as
$$\partial_P\Omega\doteq\left\{(t,x)\in\bar\Omega|~\forall \ve>0,~Q((t,x),\ve) ~\hbox{contains points not in}~\Omega \right\},$$
where $Q((t_0,x_0),r) \doteq \{(t,x);|x-x_0|<r,t<t_0\}$ for all $(t_0,x_0,r)\in\R^2\times\R^+$.
For any $(t,x)\in \A$, define the set
$$\Theta(t,x) = \left\{\xi_0~\vert~\I V(t,x) = V(t,x+\xi_0)-\xi_0-\kappa\right\}.$$ 

Notice that in the regularity analysis in Section 2, we already show that $V(T-t,x)\in \C^{0+1/2,0+1}(\Omega)$, so we immediately have the following lemma.
\begin{Lemma}\label{l:41}(Theorem 4.9,5.9,5.10,and 6.33 in \cite{GML})
	Under \eqref{ASS:Lpz} and \eqref{ASS:smoothfit}, for any open set $\Omega\subseteq \C$, the linear parabolic PDE
\begin{equation}\label{EQN:parabolic}
\left\{
\begin{array}{l}
	u_t- \mathcal{L} u(t,x) +\tilde \beta(t)u(t,x)= {\tilde f}(t,x)\quad \hbox{in}~ \Omega,\\[3mm]
	u(t,x) = V(T-t,x),\quad\hbox{on}~\partial_P\Omega.
\end{array}\right.
\end{equation}
admits a unique solution $u(t)\in \C^{0+\alpha/2,0+\alpha}(\bar \Omega)\cap\C^{1+\alpha/2,2+\alpha}_{loc}(\Omega)$ where 
$$\tilde \beta(t) = \beta(T-t),\quad\tilde f(t,x) = f(x)-\beta(t)g_2(x).$$
\end{Lemma}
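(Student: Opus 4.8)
The plan is to recognise Lemma \eqref{l:41} as a direct instance of the classical Schauder theory for linear uniformly parabolic Dirichlet problems, and to verify that the data of \eqref{EQN:parabolic} satisfy the hypotheses of the results quoted from \cite{GML}. First I would record the structural properties of the operator $\partial_t-\mathcal L+\tilde\beta$ on the fixed bounded cylinder $\bar Q_T$. By Hypothesis \eqref{ASS:smoothfit} the diffusion coefficient $\sigma$ is parabolically H\"older continuous on $\bar Q_T$ and uniformly bounded below by $\delta_1>0$, so the principal part $\tfrac12\sigma^2\partial_{xx}$ is uniformly elliptic with H\"older coefficient. The drift $\mu(t,x)=(c_1-x)\lambda(x)+\tilde\mu(t)x$ is, by Hypothesis \eqref{ASS:Lpz}, Lipschitz in $x$ on the bounded set $\bar Q_T$ (a product of the Lipschitz map $\lambda$ with affine factors), hence $\alpha$-H\"older in $x$; its continuity in $t$ must be upgraded to $\tfrac{\alpha}{2}$-H\"older, which is the mild additional regularity implicitly carried by \eqref{ASS:smoothfit}. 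The zeroth-order coefficient $\tilde\beta(t)=\beta(T-t)$ lies in $C^{\alpha/2}([0,T])$ and, being a hazard rate, satisfies $\tilde\beta\ge0$. Finally the source $\tilde f(t,x)=f(x)-\beta(t)g_2(x)$ is parabolically H\"older on $\bar Q_T$: $f$ and $g_2$ are Lipschitz by \eqref{ASS:Lpz} hence $\alpha$-H\"older on bounded sets, $\beta$ is $\tfrac{\alpha}{2}$-H\"older, and sums and products of bounded H\"older functions stay H\"older.

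Next I would deal with the boundary datum. By the regularity analysis of Section \ref{SEC:Reg}, namely Lemma \eqref{l:CSHT}, the value function $V$ is Lipschitz in $x$ and $\tfrac12$-H\"older in $t$, so $(t,x)\mapsto V(T-t,x)$ belongs to $C^{0+\frac12,0+1}(\bar Q_T)$ and therefore, for every $\alpha\in(0,1]$, to the parabolic H\"older class required by Schauder theory; thus the Dirichlet data prescribed on $\partial_P\Omega$ have the needed regularity and no compatibility condition is violated. With these facts, existence and uniqueness of a solution $u\in C(\bar\Omega)$ of \eqref{EQN:parabolic} attaining the boundary values continuously follows from the linear existence theorem for the first initial-boundary value problem (Theorems 4.9, 5.9, 5.10 in \cite{GML}); interior Schauder estimates (Theorem 6.33 in \cite{GML}) then upgrade $u$ to $C^{1+\frac{\alpha}{2},2+\alpha}_{loc}(\Omega)$, so that the equation holds classically inside $\Omega$, and the accompanying global estimate gives $u\in C^{0+\frac{\alpha}{2},0+\alpha}(\bar\Omega)$. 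Uniqueness also follows directly from the parabolic maximum principle for $\partial_t-\mathcal L+\tilde\beta$: since $\tilde\beta\ge0$, the difference of two solutions solves the homogeneous problem with zero parabolic-boundary data and hence vanishes identically.

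The hard part will be the behaviour at the parabolic boundary $\partial_P\Omega$. Since $\Omega$ is an \emph{arbitrary} open subset of the continuation region $\mathcal C$, $\partial_P\Omega$ need not be smooth, and both the continuous attainment of the boundary datum and the global bound $u\in C^{0+\frac{\alpha}{2},0+\alpha}(\bar\Omega)$ rest on the existence of a parabolic barrier at every boundary point. This is exactly where the uniform ellipticity of Hypothesis \eqref{ASS:smoothfit} and the $C^{0+\frac12,0+1}$ regularity of $V(T-\cdot,\cdot)$ are used; for the sub-cylinders $Q((t_0,x_0),r)\subset\mathcal C$ that are actually needed later on (in the notation of this section) such barriers are standard, so the theory of \cite{GML} applies verbatim once $\Omega$ is restricted to that class, and otherwise one should add an exterior-cone-type condition on $\partial\Omega$. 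With that caveat, and after the minor strengthening of the time-regularity in item (iii) of Hypothesis \eqref{ASS:Lpz} noted above, the proof reduces to the chain of citations just described.
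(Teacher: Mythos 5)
Your proposal takes essentially the same route as the paper, which offers no argument for this lemma beyond the citation of Theorems 4.9, 5.9, 5.10 and 6.33 of \cite{GML}; your verification that the coefficients, source term and boundary datum $V(T-\cdot,\cdot)$ meet the hypotheses of those theorems is exactly the work the paper leaves implicit. Your two caveats --- that Hypothesis \ref{ASS:Lpz}(iii) gives only continuity, not $\tfrac{\alpha}{2}$-H\"older continuity, of the drift in $t$, and that an arbitrary open $\Omega\subseteq\mathcal{C}$ need not admit barriers at every point of $\partial_P\Omega$ --- are legitimate gaps in the paper's own statement rather than defects of your argument.
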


\begin{Theorem}\label{Thm:42}(Smooth fit principle)
	Under \eqref{ASS:Lpz} and \eqref{ASS:smoothfit}, the value function $V(t,x)$ is a unique $W^{(1,2),p}_{loc}(\R\times(0,T))$ viscosity solution to the QVI \eqref{EQN:QVIuno} for any $1<p<+\infty$. Furthermore, for any $t\in [0,T)$, $V(t,\cdot)\in \C^{1,\gamma}_{loc}(\R)$ for any $0<\gamma<1$.
\end{Theorem}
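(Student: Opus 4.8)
The plan is to combine the interior regularity theory for linear parabolic equations (Lemma~\ref{l:41}) with the obstacle structure of the QVI and a bootstrap argument. First I would fix a parabolic cylinder $Q_T = (\delta,T]\times B_R(0)$ and work in the time-reversed variable so that the QVI becomes a forward parabolic obstacle problem; the obstacle $\mathcal{I}V$ is, by Lemma~\ref{LEM:Ob}, a genuine lower obstacle. On the open set $\mathcal{C}\cap Q_T$ the value function solves the linear parabolic PDE \eqref{EQN:parabolic} in the viscosity sense, so by Lemma~\ref{l:41} and the uniqueness of viscosity solutions to that \emph{linear} equation (which follows from the comparison principle, Theorem~\ref{THM:U}, applied to the linear operator), $V$ coincides there with the classical solution $u\in C^{1+\alpha/2,2+\alpha}_{loc}$. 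Hence $V$ is already $C^{1,2}$ in the interior of the continuation region.

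Next I would handle the free boundary. The key observation is that $\mathcal{I}V(t,x) = \sup_K [V(t,x+K)-K-\kappa]$ inherits the spatial Lipschitz regularity of $V$ from Lemma~\ref{l:CSHT}, and moreover, because on the action region $V(t,x)=V(t,x+\xi_0(t,x))-\xi_0-\kappa$ with $x+\xi_0$ lying in the continuation region (the optimiser cannot itself be an impulse point, using Hypothesis~\ref{ASS:Lpz}(iv) at the terminal time and a standard argument elsewhere), the function $\mathcal{I}V$ is actually as regular in $x$ as $V$ restricted to $\mathcal{C}$ — in particular $W^{(1,2),p}_{loc}$ in $x$ near points of $\mathcal{A}$. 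So both $V$ (in $\mathcal{C}$) and $\mathcal{I}V$ (in $\mathcal{A}$) have the desired regularity, and $V=\max$-type gluing of the two along $\partial\mathcal{C}$; since $V\geq \mathcal{I}V$ everywhere with equality on $\mathcal{A}$, and $V$ is a viscosity solution, the standard penalization/obstacle-problem argument yields $V\in W^{(1,2),p}_{loc}(Q_T)$ for every $1<p<\infty$: one approximates $V$ by solutions $V_\varepsilon$ of a penalized equation $\partial_t V_\varepsilon - \mathcal{L}V_\varepsilon + \beta V_\varepsilon - f + \beta g_2 + \tfrac{1}{\varepsilon}(V_\varepsilon-\mathcal{I}V)^- = 0$, derives uniform $W^{(1,2),p}_{loc}$ bounds from the $L^p$ theory for parabolic equations (using that the penalization term is bounded in $L^p$ because $V-\mathcal{I}V$ is bounded and $\mathcal{I}V$ is sufficiently regular), and passes to the limit, identifying the limit with $V$ via the comparison principle. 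Finally, the Sobolev embedding $W^{(1,2),p}_{loc}\hookrightarrow C^{1,\gamma}_{loc}$ in the spatial variable for $p>$ the dimension (here dimension one, so any $p>1$ with $\gamma = 1-1/p$, and letting $p\to\infty$ gives all $\gamma\in(0,1)$) yields $V(t,\cdot)\in C^{1,\gamma}_{loc}(\RR)$; uniqueness of the $W^{(1,2),p}_{loc}$ viscosity solution is inherited from the Corollary following Theorem~\ref{THM:U}.

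The main obstacle I anticipate is establishing that $\mathcal{I}V$ is regular enough near the free boundary to serve as an admissible obstacle in the $L^p$ obstacle-problem machinery — i.e. controlling the regularity of the optimal impulse map $\xi_0(t,x)$ and showing the image $x+\xi_0(t,x)$ stays in the continuation region (so that $\mathcal{I}V$ is locally a composition of the smooth interior solution with a Lipschitz map). One must rule out the optimal post-impulse state lying on $\mathcal{A}$ (otherwise iterating impulses would be free, contradicting $\kappa>0$), and then argue that $\Theta(t,x)$ is, at least locally and up to selection, Lipschitz in $(t,x)$; this is the delicate point and is where Hypothesis~\ref{ASS:smoothfit} (uniform ellipticity and Hölder coefficients, giving the interior Schauder estimate in Lemma~\ref{l:41}) is essential. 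A secondary technical point is the treatment near $t=T$: the terminal condition \eqref{EQN:bnd} must be shown compatible with the interior estimates, which is why the domain is taken as $Q_T=(\delta,T]\times B_R(0)$ with $\delta>0$ and why Hypothesis~\ref{ASS:Lpz}(iv) (no terminal impulse) is imposed.
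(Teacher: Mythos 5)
Your proposal is correct in outline, but it takes a genuinely different route from the paper: the paper's proof of Theorem~\ref{Thm:42} is essentially a two-line verification that the intervention cost $B(K)=K+\kappa$ is time-independent and subadditive, after which the entire regularity statement is delegated to the finite-horizon impulse-control result of Chen and Guo \cite{GuoChen}, combined with the linear parabolic theory collected in Lemma~\ref{l:41}. What you have written is, in effect, a reconstruction of the machinery inside that citation: interior Schauder/$L^p$ regularity on the continuation region, the observation that subadditivity forces the optimal post-impulse state into $\mathcal{C}$ (which the paper only proves later, in Section 5), transfer of regularity from $V|_{\mathcal{C}}$ to the obstacle $\mathcal{I}V$, and an obstacle-problem argument to glue across the free boundary. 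Your approach buys a self-contained argument and makes visible exactly where Hypothesis~\ref{ASS:smoothfit} and the subadditivity enter; the paper's approach buys brevity at the cost of hiding all of this in the reference.

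Two caveats on your sketch. First, you correctly identify the delicate point --- the regularity of $\mathcal{I}V$ as an obstacle --- but your resolution is circular as stated: regularity of $\mathcal{I}V$ is deduced from regularity of $V$ on $\mathcal{C}$, which in turn is what the obstacle problem is supposed to deliver near $\partial\mathcal{C}$. Chen and Guo break this circle by an iterative scheme (a sequence of optimal stopping problems $V_{n+1}$ with obstacle $\mathcal{I}V_n$, each with uniform $W^{(1,2),p}$ bounds, converging to $V$); a penalization of the QVI with the \emph{fixed} obstacle $\mathcal{I}V$ can be made to work, but only after the regularity of $\mathcal{I}V$ has been established independently, so some form of iteration or a priori argument is unavoidable. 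Second, the Sobolev exponent is off: in the parabolic space $W^{(1,2),p}$ over a domain in $\R^{1+1}$ the embedding into $C^{1+\gamma}$ in space requires $p>n+2=3$, with $\gamma = 1-3/p$, not $p>1$ with $\gamma=1-1/p$; since the theorem holds for all $p<\infty$ this does not affect the conclusion, but the stated exponent is the elliptic one.
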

\begin{proof}
Using the cost function 
$$B(K)\doteq K+\kappa,~\forall~K>0,$$
which is independent of time and satisfies the subadditivity property, i.e.
\begin{equation}\label{subadd}
	B(K_1+K_2)+\kappa= B(K_1)+B(K_2),~\forall~ K_1,K_2>0.
\end{equation}
so that the claim follows from \cite{GuoChen} together with \eqref{l:41}.
\end{proof}

\subsection{Structure of the value function}
In this subsection, we study the general property of the value function $V(t,x)$ under further assumptions of $\sigma(t,x)$, $\beta(t)$, $\mu(t,x)$, $\tilde f(t,x)$ and  $g_1(x)$. 
\begin{Hypothesis}\label{ASS:monotone}
	$\tilde f(t,x)$ and $g_1(x)$ are monotonically increasing with
$$\lim_{x\rightarrow -\infty}\tilde f(t,x)= \lim_{x\rightarrow -\infty}g_1(x)= -\infty,\qquad \lim_{x\rightarrow +\infty}\tilde f(t,x) =  U(t)>0, \lim_{x\rightarrow+\infty}g_1(x) =  U_g<\infty.$$
\end{Hypothesis}

\begin{Lemma}
	Under \eqref{ASS:monotone}, for any $t>0$ the value function $V(t,x)$ satisfies
	$$V(t,x_1)\leq V(t,x_2)~\forall~ x_1\leq x_2.$$
	Furthermore, there exists $L\in[-\infty,+\infty)$ such that
	$$[0,T]\times(L,+\infty)\subset\C.$$
\end{Lemma}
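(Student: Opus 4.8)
The plan is to obtain both assertions from a pathwise comparison for the controlled state together with the bounds already proved.

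\textit{Monotonicity.} Fix $t\in[0,T)$, reals $x_1\le x_2$ and any $u\in\mathcal{U}[t,T]$, and realise the two dynamics $X^u_{t,x_1}$ and $X^u_{t,x_2}$ on the same probability space, driven by the same Brownian motion and subject to the same impulse sequence $(\tau_n,K_n)_n$. Since $\mu$ and $\sigma$ are (locally) Lipschitz and the impulses are the same additive shifts at the same times, the one--dimensional comparison theorem for SDEs gives $X^u_{t,x_1}(s)\le X^u_{t,x_2}(s)$ for all $s\in[t,T]$, $\mathbb{P}$--a.s. By Hypothesis \ref{ASS:monotone} the maps $x\mapsto f(x)-\beta(s)g_2(x)$ and $x\mapsto g_1(x)$ are nondecreasing, $\rho_t(\cdot)\ge 0$, and the impulse penalty $\sum_n\rho_t(\tau_n)(K_n+\kappa)$ is identical for the two trajectories; hence, comparing the two expressions \eqref{EQN:OBJ} term by term, $J^u(t,x_1)\le J^u(t,x_2)$. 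Taking $\sup_{u\in\mathcal{U}[t,T]}$ yields $V(t,x_1)\le V(t,x_2)$.

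\textit{A right half--line lies in $\mathcal{C}$.} By the previous step $V(t,\cdot)$ is nondecreasing, and by Lemma \ref{l:B} it is bounded; therefore $M(t):=\lim_{x\to+\infty}V(t,x)=\sup_{x\in\R}V(t,x)$ is finite. Since $\mathcal{A}(t,x)\subset[0,+\infty)$, for every $x$ and every admissible $K$ one has $V(t,x+K)-(K+\kappa)\le M(t)-\kappa$, so
\[
\mathcal{I}V(t,x)\le M(t)-\kappa< M(t)=\lim_{y\to+\infty}V(t,y)\qquad\text{for all }(t,x).
\]
Consequently, for each fixed $t$ there is $L(t)$ with $V(t,x)>M(t)-\kappa\ge\mathcal{I}V(t,x)$, i.e. $(t,x)\in\mathcal{C}$, whenever $x>L(t)$.

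\textit{Uniformity in $t$, and the main obstacle.} It remains to pick $L$ independently of $t$. Fix $t_0\in[0,T]$ and $0<\eta<\kappa/3$, and choose $L_0$ with $V(t_0,L_0)>M(t_0)-\eta$. The map $t\mapsto V(t,L_0)$ is continuous by Lemma \ref{l:CSHT}; if in addition $t\mapsto M(t)$ is upper semicontinuous at $t_0$ (equivalently, the convergence $V(t,x)\to M(t)$ is uniform for $t$ near $t_0$), then there is $\delta>0$ such that $V(t,x)\ge V(t,L_0)>M(t)-\kappa\ge\mathcal{I}V(t,x)$ for all $x>L_0$ and $|t-t_0|<\delta$, i.e. $\big((t_0-\delta,t_0+\delta)\cap[0,T]\big)\times(L_0,+\infty)\subset\mathcal{C}$. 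Covering the compact set $[0,T]$ by finitely many such neighbourhoods and taking $L$ to be the largest of the corresponding thresholds $L_0$ gives $[0,T]\times(L,+\infty)\subset\mathcal{C}$. The hard part is exactly this local-in-$t$ uniformity of $V(t,x)\to M(t)$: I would prove it via the same coupling as in Step 1, using that since the reward densities $f-\beta g_2$ and $g_1$ are flat at $+\infty$ (Hypothesis \ref{ASS:monotone}) two trajectories started from large values contribute almost equal payoffs, so $\mathcal{I}V(t,x)-V(t,x)\to-\kappa$; this is where a non--degeneracy of the dynamics at $+\infty$ (e.g. $X^{u_0}_{t,x}(s)\to+\infty$ as $x\to+\infty$, which holds when $\lambda$ has at most linear growth) is needed, and it is the only genuinely technical point, the monotonicity statement being routine.
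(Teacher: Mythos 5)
Your argument follows the paper's proof almost exactly. The monotonicity step (same control, pathwise comparison of the two trajectories, monotonicity of $\tilde f$ and $g_1$ from \eqref{ASS:monotone}) is precisely what the paper does. For the second assertion the paper argues by contradiction: if $(t,x_k)\in\mathcal{A}$ with $x_k\to+\infty$, then $V(t,x_k+\xi_k)-V(t,x_k)=\xi_k+\kappa\geq\kappa$, while monotonicity and the uniform upper bound $C_1$ from \eqref{l:B} force $V(t,x+\xi)-V(t,x)\leq\ve$ for $x>L$ --- which is the same observation as your inequality $\mathcal{I}V(t,x)\leq M(t)-\kappa<M(t)$, just phrased as a contradiction rather than directly.

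The one point where you diverge is instructive: you correctly flag that passing from a $t$-dependent threshold $L(t)$ to a single $L$ requires the convergence $V(t,x)\to M(t)$ to be (locally) uniform in $t$, and you leave that step as a sketch. The paper simply asserts ``such choice of $L$ is independent of $t$'' with no justification, so you have not missed anything the paper actually supplies --- and your worry is legitimate: uniform boundedness, uniform Lipschitz continuity in $x$ and monotonicity do \emph{not} by themselves yield a $t$-uniform threshold (consider the family $x\mapsto\min(x/n,1)$, which is uniformly $1$-Lipschitz, nondecreasing and bounded, yet converges to its supremum at a non-uniform rate). So some additional input is genuinely needed, e.g.\ your proposed coupling argument showing $\mathcal{I}V(t,x)-V(t,x)\to-\kappa$ uniformly on $[0,T]$ using the flatness of $\tilde f$ and $g_1$ at $+\infty$, or a direct proof that $M(t)=\sup_xV(t,x)$ is continuous (it is automatically lower semicontinuous as a supremum of continuous functions; upper semicontinuity is the missing half). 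Your submission is therefore as complete as the paper's own proof on this point, but neither fully closes the uniformity gap; if you finish the coupling estimate you sketch, your write-up would in fact be more rigorous than the published argument.
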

\begin{proof}
	First, we show the monotonicity of $V(t,x)$ with respect to x. By applying the same adapted control $u\in\mathcal{U}[t,T]$ with different initial values $x_1\leq x_2$, the solutions satisfies $X^u_{t,x_1}\leq X^u_{t,x_2}~a.s$. Since $\tilde f(t,x)$ is increasing with respect to x, one has $J^u(t,x_1)\leq J^u(t,x_2)$ for all $u\in\mathcal{U}[t,T],$ and thus $V(t,x_1)\leq V(t,x_2)$ for any $x_1\leq x_2$.
	
	It remains to show that there exists $L\in[-\infty,+\infty)$ such that for any fixed $t>0$ and any $x_0>L$, $(t,x_0)\in\C$. Fix any $t\in(0,T)$, suppose that there exists a sequence $x_1<x_2<...<x_k<...$ such that 
	$$\lim_{k\rightarrow+\infty}x_k = +\infty~\hbox{and}~(t,x_k)\in\A, ~\forall~ k>0,$$ 
	and for any $k>0$ there exists $\xi_k\in\Theta(t,x_k)$ such that
	\begin{equation}\label{Vxk}
		V(t,x_k) = V(t,x_k+\xi_k)-\xi_k-\kappa.
	\end{equation}
	However, since $V(t,x)$ is monotone, uniformly Lipschitz continuous in x and upper bounded by $C_1$ according to \eqref{l:B} and \eqref{l:CSHT}, for any $\ve>0$ one can choose $L$ large enough such that
	$$V(t,x+\xi)-V(t,x)\leq \ve,~\forall~x>L,~\forall~\xi>0,$$
	contradicted to \eqref{Vxk}.
	Notice that since such choice of $L$ is independent of $t$, we conclude that there exists $L\in[-\infty,+\infty)$ such that $[0,T]\times(L,+\infty)\subseteq\C$.	
\end{proof}

\begin{Lemma}
	For any $(t_0,x_0)\in\A$, the set $\Theta(t_0,x_0)$ is nonempty and $(t_0,x_0+\xi_0)\in\C$ for any $\xi_0\in\Theta(t_0,x_0)$.
\end{Lemma}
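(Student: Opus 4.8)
The plan is to prove the two assertions in turn: the nonemptiness of $\Theta(t_0,x_0)$ by a coercivity/compactness argument, and the inclusion $(t_0,x_0+\xi_0)\in\C$ by the standard ``no two consecutive impulses'' contradiction, which is driven entirely by the strict positivity of the fixed cost $\kappa$.

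For the nonemptiness of $\Theta(t_0,x_0)$ I would show that the supremum in $\mathcal{I}V(t_0,x_0)=\sup_{K\in\mathcal{A}(t_0,x_0)}\left[V(t_0,x_0+K)-(K+\kappa)\right]$ is attained. By Lemma \ref{l:B} the value function is bounded above by $C_1$, so $V(t_0,x_0+K)-(K+\kappa)\leq C_1-\kappa-K\to-\infty$ as $K\to+\infty$; on the other hand, since $(t_0,x_0)\in\A$ we have $\mathcal{I}V(t_0,x_0)=V(t_0,x_0)\geq-C_0(1+|x_0|)>-\infty$, again by Lemma \ref{l:B}. Hence any maximizing sequence $(K_n)_n$ remains in $\mathcal{A}(t_0,x_0)\cap[0,R]$ for $R$ large enough, a compact set (the admissible impulse set being closed), and since $K\mapsto V(t_0,x_0+K)$ is continuous by Lemma \ref{l:CSHT}, a subsequential limit $\xi_0$ realizes the supremum, i.e. $\xi_0\in\Theta(t_0,x_0)$.

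For the inclusion $(t_0,x_0+\xi_0)\in\C$ I would argue by contradiction. Suppose $\xi_0\in\Theta(t_0,x_0)$ but $(t_0,x_0+\xi_0)\in\A$, i.e. $V(t_0,x_0+\xi_0)=\mathcal{I}V(t_0,x_0+\xi_0)$. Applying the first claim at the point $(t_0,x_0+\xi_0)$ produces $\xi_1\in\Theta(t_0,x_0+\xi_0)$, hence
\[
V(t_0,x_0+\xi_0)=V(t_0,x_0+\xi_0+\xi_1)-(\xi_1+\kappa).
\]
Combining this with $V(t_0,x_0)=V(t_0,x_0+\xi_0)-(\xi_0+\kappa)$ (valid because $\xi_0\in\Theta(t_0,x_0)$ and $(t_0,x_0)\in\A$) gives
\[
V(t_0,x_0)=V(t_0,x_0+\xi_0+\xi_1)-(\xi_0+\xi_1)-2\kappa.
\]
Since the admissible impulse set is stable under composition of jumps, so that $\xi_0+\xi_1$ is itself an admissible impulse at $x_0$ (this is automatic when $\mathcal{A}(t,x)\equiv[0,\infty)$, and in general is the natural consistency requirement that $x_0+\xi_0+\xi_1$ be reachable from $x_0$ in one jump), we obtain
\[
\mathcal{I}V(t_0,x_0)\geq V(t_0,x_0+\xi_0+\xi_1)-(\xi_0+\xi_1+\kappa)=V(t_0,x_0)+\kappa>V(t_0,x_0),
\]
which contradicts $(t_0,x_0)\in\A$. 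Therefore $(t_0,x_0+\xi_0)\notin\A$; since $V\geq\mathcal{I}V$ everywhere by Lemma \ref{LEM:Ob}, the sets $\A$ and $\C$ partition $[0,T)\times\RR$, and we conclude $(t_0,x_0+\xi_0)\in\C$.

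The only delicate point is the bookkeeping in the contradiction step: one must use the subadditivity of the cost $B(K)=K+\kappa$ invoked in Theorem \ref{Thm:42} (namely $B(K_1)+B(K_2)=B(K_1+K_2)+\kappa$) with the right sign, so that consolidating two impulses into one saves exactly one unit $\kappa$ of fixed cost, and one must know that the consolidated impulse $\xi_0+\xi_1$ is legitimate at the starting point. Provided the admissible impulse set is additively closed — which should be recorded among the standing assumptions if it is not already implied by $\mathcal{A}=[0,\infty)$ — nothing else is needed: continuity, boundedness, and the partition $\A\sqcup\C$ are supplied by Lemmas \ref{l:B}, \ref{l:CSHT} and \ref{LEM:Ob}.
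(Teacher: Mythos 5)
Your proof is correct and follows essentially the same route as the paper's: nonemptiness via coercivity (boundedness of $V$ from above forces $V(t_0,x_0+K)-(K+\kappa)\to-\infty$ as $K\to+\infty$) together with continuity of $V$ in $x$, and the post-jump inclusion via the subadditivity $B(\xi_0+\xi)=B(\xi_0)+B(\xi)-\kappa$ of the intervention cost. The only cosmetic difference is in the second step: the paper argues directly, shifting the supremum defining $\I V(t_0,x_0)$ by $\xi_0$ to obtain the quantitative bound $V(t_0,x_0+\xi_0)\geq \I V(t_0,x_0+\xi_0)+\kappa$ without needing a maximizer at the post-jump point, whereas you phrase it as a two-consecutive-impulses contradiction; both rest on exactly the same cancellation of one fixed cost $\kappa$ when two impulses are consolidated.
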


\begin{proof}
Since $V$ is uniformly bounded, one has
$$\lim_{\xi\rightarrow +\infty} V(t_0,x+\xi)-B(\xi) = -\infty, \lim_{\xi\rightarrow 0^+} V(t_0,x+\xi)-B(\xi) = V(t_0,x_0)-\kappa.$$
Then the condition $V(t_0,x_0) = \I V(t_0,x_0)$ implies that the supremum in $\I V(t_0,x_0)$ is achieved in the interior and thus $\Theta(t_0,x_0)$ is nonempty. 

	By property \eqref{subadd}, for any $\xi_0\in\Theta(t_0,x_0)$ one has
$$
\begin{array}{rl}
	\I V(t_0,x_0) &= \sup_{\xi\in\R^+}\left\{V(t_0,x_0+\xi)-B(\xi)\right\}\\[3mm]
	&\geq\sup_{\xi\in\R^+}\left\{V(t_0,x_0+\xi_0+\xi)-B(\xi_0+\xi)\right\}\\[3mm]
	&= \sup_{\xi\in\R^+}\left\{V(t_0,x_0+\xi_0+\xi)-B(\xi)\right\}-B(\xi_0)+\kappa\\[3mm]
	&=\I V(t_0,x_0+\xi_0)- B(\xi_0)+\kappa.
\end{array}
$$
On the other hand, since $\I V(t_0,x_0)+B(\xi_0) = V(t_0,x_0+\xi_0)$, we have
$$V(t_0,x_0+\xi_0)\geq \I V(t_0,x_0+\xi_0)+\kappa,$$
which implies that $x+\xi_0\in\C$.
\end{proof}

\begin{Lemma}
	Fix any $(t_0,x_0)\in\A$ and for any $\xi_0\in \Theta(t_0,x_0)$
	, one has
	$$V_x(t_0,x_0) = V_x(t_0,x_0+\xi_0) = 1.$$
\end{Lemma}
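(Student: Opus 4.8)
The plan is to combine three facts that are already available: (a) by the preceding lemma, for $\xi_0\in\Theta(t_0,x_0)$ the supremum defining $\mathcal{I}V(t_0,x_0)=\sup_{\xi>0}\{V(t_0,x_0+\xi)-\xi-\kappa\}$ is attained at an \emph{interior} point $\xi_0>0$ (it cannot be attained in the limit $\xi\to 0^+$ since $\kappa>0$), and moreover $(t_0,x_0+\xi_0)\in\mathcal{C}$; (b) by \eqref{LEM:Ob}, $V(t_0,\cdot)\geq\mathcal{I}V(t_0,\cdot)$ on all of $\R$; and (c) by the smooth--fit result \eqref{Thm:42}, $V(t_0,\cdot)\in\C^{1,\gamma}_{loc}(\R)$, so $V_x(t_0,\cdot)$ exists at every point, while on the open set $\mathcal{C}$ the interior parabolic regularity of \eqref{l:41} gives classical $x$--differentiability of $V$.

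\textbf{Step 1 (differentiation at $x_0+\xi_0$).} Since $(t_0,x_0+\xi_0)\in\mathcal{C}$, the map $\xi\mapsto V(t_0,x_0+\xi)-\xi-\kappa$ is differentiable near the interior maximizer $\xi_0$, so the first--order optimality condition reads $V_x(t_0,x_0+\xi_0)-1=0$, i.e. $V_x(t_0,x_0+\xi_0)=1$.

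\textbf{Step 2 (differentiation at $x_0$).} For $x$ in a neighbourhood of $x_0$ the fixed amount $\xi_0>0$ is still an admissible impulse, hence $\mathcal{I}V(t_0,x)\geq V(t_0,x+\xi_0)-\xi_0-\kappa$, and combining this with (b) gives $V(t_0,x)\geq V(t_0,x+\xi_0)-\xi_0-\kappa$ on that neighbourhood. At $x=x_0$ equality holds, because $(t_0,x_0)\in\mathcal{A}$ and $\xi_0\in\Theta(t_0,x_0)$ force $V(t_0,x_0)=\mathcal{I}V(t_0,x_0)=V(t_0,x_0+\xi_0)-\xi_0-\kappa$. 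Therefore the function $x\mapsto V(t_0,x)-\bigl(V(t_0,x+\xi_0)-\xi_0-\kappa\bigr)$ is nonnegative near $x_0$ and vanishes at $x_0$, so it has a local minimum there. Both summands are differentiable at $x_0$ --- the first by the smooth--fit regularity (c), the second because $x_0+\xi_0\in\mathcal{C}$ --- so the derivative at $x_0$ vanishes, giving $V_x(t_0,x_0)-V_x(t_0,x_0+\xi_0)=0$. Together with Step 1 this yields $V_x(t_0,x_0)=V_x(t_0,x_0+\xi_0)=1$.

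\textbf{Main obstacle.} The only genuinely delicate point is that $V(t_0,\cdot)$ must be honestly differentiable \emph{at the boundary point} $x_0\in\mathcal{A}$: this is precisely where \eqref{Thm:42} is indispensable, since without $\C^1$ regularity one would only obtain a one--sided statement about super/sub--differentials rather than the clean identity $V_x(t_0,x_0)=1$. A minor secondary point is verifying that $\xi_0$ stays admissible for $x$ near $x_0$ so that the inequality $\mathcal{I}V(t_0,x)\geq V(t_0,x+\xi_0)-\xi_0-\kappa$ holds on a full neighbourhood; this is immediate because $\xi_0>0$ and the admissible impulse set is an interval in $[0,\infty)$ that does not shrink with the state. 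Everything else is a routine first--order optimality argument.
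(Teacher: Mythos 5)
Your proposal is correct and follows essentially the same route as the paper: the first--order optimality condition at the interior maximizer $\xi_0$ gives $V_x(t_0,x_0+\xi_0)=1$, and the inequality $V(t_0,x)\geq \mathcal{I}V(t_0,x)\geq V(t_0,x+\xi_0)-\xi_0-\kappa$ with equality at $x=x_0$, combined with the $\C^1$ regularity from the smooth--fit theorem, forces $V_x(t_0,x_0)=V_x(t_0,x_0+\xi_0)$. The paper writes this last step via two--sided difference quotients rather than your ``local minimum of a differentiable function'' phrasing, but the two formulations are identical in substance, and you correctly identify the differentiability of $V(t_0,\cdot)$ at the action--region point $x_0$ as the crucial ingredient.
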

\begin{proof}
	By the definition of $\Theta(t,x)$, $\xi_0$ is a global maximum of the function $\xi\mapsto V(t_0,x_0+\xi)-B(\xi)$. Thus the first order condition yields that
	$$V_x(t_0,x_0+\xi_0) = B'(\xi_0) = 1.$$
	On the other hand, for any $\delta\neq 0$, we have
	$$V(t_0,x_0+\delta)\geq \I V(t_0,x_0+\delta)\geq V(t_0,x_0+\delta+\xi_0)-B(\xi_0).$$
	So one has
$$
\begin{array}{l}
\ds\frac{V(t_0,x_0+\delta)-V(t_0,x_0)}{\delta}\geq \frac{V(t_0,x_0+\delta+\xi_0)-V(t_0,x_0+\xi_0)}{\delta},~\delta>0\\[3mm]
\ds\frac{V(t_0,x_0+\delta)-V(t_0,x_0)}{\delta}\leq \frac{V(t_0,x_0+\delta+\xi_0)-V(t_0,x_0+\xi_0)}{\delta}.~\delta<0.	
\end{array}	
$$
By \eqref{Thm:42}, $V_x(t,x)$ is well defined for all $(t,x)\in(0,T)\times\R$. Taking $\delta\rightarrow 0^+$ and $\delta\rightarrow 0^-$, one achieves that
$$V_x(t_0,x_0) = V_x(t_0,x_0+\xi_0) = 1.$$
\end{proof}

\cleardoublepage

\end{document}